\documentclass[11pt]{article}


\input{setup.sty}

 
\title{Quantum Extremal Surfaces and the Holographic Entropy Cone}
 
\author[1]{Chris Akers,}
\emailAdd{cakers@mit.edu}
 
\author[2]{Sergio Hern\'andez-Cuenca}
\emailAdd{sergiohc@ucsb.edu}

\author[2]{and Pratik Rath}
\emailAdd{rath@ucsb.edu}
 
\affiliation[1]{Center for Theoretical Physics,\\
Massachusetts Institute of Technology, Cambridge, MA 02139, USA}

\affiliation[2]{Department of Physics, University of California, Santa Barbara, CA 93106, USA}

 
\abstract{Quantum states with geometric duals are known to satisfy a stricter set of entropy inequalities than those obeyed by general quantum systems.
The set of allowed entropies derived using the Ryu-Takayanagi (RT) formula defines the Holographic Entropy Cone (HEC).
These inequalities are no longer satisfied once general quantum corrections are included by employing the Quantum Extremal Surface (QES) prescription.
Nevertheless, the structure of the QES formula allows for a controlled study of how quantum contributions from bulk entropies interplay with HEC inequalities.
In this paper, we initiate an exploration of this problem by relating bulk entropy constraints to boundary entropy inequalities.
In particular, we show that requiring the bulk entropies to satisfy the HEC implies that the boundary entropies also satisfy the HEC.
Further, we also show that requiring the bulk entropies to obey monogamy of mutual information (MMI) implies the boundary entropies also obey MMI.}

 
\begin{document}
 
 
\maketitle

\section{Introduction}

The holographic principle has served as a remarkable guide for research in quantum gravity~\cite{tHooft:1993dmi,Susskind:1994vu}.
Its most prominent realization, the AdS/CFT correspondence~\cite{Maldacena_1999,Witten1998}, has in recent years elucidated the underlying mechanism for bulk gravitational observables to be encoded holographically on the boundary theory~\cite{Dong:2016eik,Faulkner:2017vdd, Cotler:2017erl,Hayden:2018khn}.
The derivation of these results has heavily relied on a progressively sophisticated understanding of holographic entanglement entropy.

The program of relating the von Neumann entropy of boundary regions to bulk quantities started with the groundwork of Ref.~\cite{Ryu:2006bv}, which provided the first version of this entry in the holographic dictionary, known as the Ryu-Takayanagi (RT) formula.
According to it, the von Neumann entropy $S(R)$ of a boundary region $R$ is given by the area $\mathcal{A}(\gamma_R)$ of a minimal-area bulk codimension-$2$ surface $\gamma_R$ homologous to $R$ relative to $\partial R$.
More explicitly,
\begin{equation}
\label{eq:RT}
    \SRT(R) = \min_{\gamma_R} \; \frac{\mathcal{A}(\gamma_R)}{4 G},
\end{equation}
where the minimization is over surfaces $\gamma_R$ contained in a time-symmetric slice and anchored to $R$, i.e., subject to the condition $\partial \gamma_R = \partial R$. 
This relation underpins the idea of geometry emerging from entanglement, an important theme in our understanding of holography~\cite{VanRaamsdonk:2010pw}.
The RT formula, \Eqref{eq:RT}, and its time-dependent~\cite{Hubeny:2007xt} and quantum generalizations~\cite{Faulkner:2013ana,Engelhardt:2014gca} have been exploited in various ways in the context of bulk reconstruction from boundary data~\cite{Dong:2016eik,Faulkner:2017vdd, Cotler:2017erl,Hayden:2018khn,Czech:2015qta,Czech:2016xec,Bao:2019bib,Bao:2020abm}. 

Given the insights that holography provides, it has been of great interest to study necessary and sufficient conditions for field theories to have holographic duals~\cite{Heemskerk:2009pn}.
Within such boundary theories, there is a ``code subspace'' of states whose bulk duals admit semiclassical geometric descriptions~\cite{Dong:2016eik, Harlow:2016vwg}. 
These so-called holographic states are a special sub-class of all quantum states, satisfying certain additional constraints.
The RT formula has allowed us to identify some of these constraints as inequalities that the entropies of boundary subregions need to satisfy. The first such genuinely holographic inequality that was discovered was the monogamy of mutual information (MMI)~\cite{Hayden:2011ag,Wall:2012uf}, an inequality on $3$ parties which reads
\begin{equation}\label{eq:mmi}
    \S{AB}+\S{AC}+\S{BC}\ge \S{A}+\S{B}+\S{C}+\S{ABC}.
\end{equation}
More generally, for any set of parties $[n]\equiv\{1,2,\dots,n\}$, one expects the RT formula to imply inequalities involving various combinations of the entropies of the $2^n-1$ subsystems associated to the nonempty subsets $\varnothing\ne I\subseteq[n]$.
A systematic study of such entropy inequalities obeyed by the RT formula was initiated by Ref.~\cite{Bao:2015bfa}.
The starting point is to, for each $n\ge1$, arrange the $2^n-1$ subsystem entropies of general mixed states on $n$ parties into entropy vectors $S\in\mathbb{R}^{2^n-1}$.
This allows one to think of entropy inequalities as bounding a space of allowed entropy vectors in $\mathbb{R}^{2^n-1}$. Ref.~\cite{Bao:2015bfa} was able to prove that this space is a polyhedral cone, i.e., a convex space bounded by a finite set of homogeneous, linear inequalities at each $n$, which they termed the Holographic Entropy Cone (HEC).
Since then, there has been a lot of interest in understanding the HEC~\cite{Hubeny:2018trv,Hubeny:2018ijt,Cuenca:2019uzx,He:2019ttu,He:2020xuo,Avis:2021xnz} and the general structure of entanglement in holographic states~\cite{Freedman:2016zud,Nezami:2016zni,Cui:2018dyq,Akers:2019gcv}.

An important caveat, however, is that the RT formula only works to leading order in $G$ when the amount of entropy in bulk matter is small, i.e., when the geometric $\mathcal{O}(1/G)$ term dominates.
The perturbative generalization of \Eqref{eq:RT} to all orders in $G$ utilizes the notion of a quantum extremal surface (QES)~\cite{Engelhardt:2014gca}.
A bulk surface $\gamma_R$ anchored to $R$ is called a QES if it extremizes the generalized entropy functional
\begin{equation}\label{eq:Sgen}
    \Sgen(R) = \frac{\mathcal{A}(\gamma_R)}{4 G} + \Sbulk(\sigma_{R}),
\end{equation}
where the last term is the von Neumann entropy of bulk quantum fields on any achronal homology region $\sigma_{R}$ defining the entanglement wedge of $R$ and satisfying $\partial \sigma_{R} = \gamma_R\cup R$.
Accordingly, the QES formula states that $S(R)$ is given by
\begin{equation}\label{eq:QES}
    \SQES(R) = \min \left\{ \underset{\gamma_R}{\ext} \; \Sgen(R) \right\},
\end{equation}
which instructs one to pick the minimal value over extrema of the generalized entropy if there happens to be more than one QES.
This generalization has been of paramount importance in recent computations of the unitary Page curve of black hole evaporation~\cite{Penington:2019npb,Almheiri:2019psf}, a remarkable development in our understanding of the black hole information problem~\cite{Penington:2019kki, Almheiri:2019hni,Almheiri:2019qdq}.\footnote{For a review of these topics, see Refs.~\cite{Almheiri:2020cfm,Raju:2020smc}.}

Importantly, in settings where \Eqref{eq:QES} receives contributions from the bulk entropy term $\Sbulk$ of \Eqref{eq:Sgen} in unrestricted ways, the arguments for the HEC inequalities no longer apply.
In fact, one expects that by choosing the bulk matter to have arbitrary entanglement structures, there are no entropy constraints on the boundary quantum state other than the universal ones obeyed by quantum states (such as strong subadditivity). Hence the interesting question to ask is what effect entropic constraints on the bulk matter fields have on the entanglement structure of the boundary state.

To make this problem more precise, let us introduce the notion of a QES entropy cone as the space of all entropy vectors compatible with the QES formula, \Eqref{eq:QES}, at a moment of time symmetry (analogous to the HEC for RT).\footnote{A generalization to nontrivial time-dependent settings will be no less subtle than that of understanding the analogue of the HEC for HRT, a problem which remains barely understood except for specific situations such as the low-dimensional case studied in Ref.~\cite{Czech:2019lps}.} As alluded to above, we expect such a QES cone to coincide with the general quantum cone of Ref.~\cite{pippenger2003inequalities}. In general, the QES cone can be shown to contain the quantum cone as a subset since one can construct situations where the area term can be neglected. Further, it seems plausible that the QES cone is the same as the quantum cone since one can use tensor networks with bulk legs to construct states where the QES formula applies~\cite{Hayden:2016cfa}.

One could then impose constraints on $\Sbulk$, beyond those of quantum mechanics, to obtain a constrained QES cone which will naturally be a subset of the original one. Such constrained QES cones cannot be arbitrarily small: regardless of how strong the constraints one imposes on the bulk entropy are, the QES formula will always be able to probe all possible entropies compatible with HEC inequalities by simply neglecting the bulk entropy term. Hence we find that constrained QES cones will always be supersets of the HEC and, in that sense, be quantum-corrected versions of the HEC. Indeed, we see that by varying the restrictions on the quantum contribution $\Sbulk$, constrained QES cones nicely interpolate between the ``classical'' HEC and the fully quantum entropy cone.

In this paper, we initiate the program of understanding quantum corrections to the HEC by looking at a couple of specific constrained QES cones. In particular, we investigate the QES cone 1) when $\Sbulk$ is constrained to obey all HEC inequalities and 2) when $\Sbulk$ is constrained to obey just MMI.
We refer to these as the HEC-constrained QES cone and MMI-constrained QES cone, respectively.

For instance, in order for the boundary to satisfy MMI, which is a $3$-party inequality, it was shown in Ref.~\cite{Akers:2019lzs} that the bulk state would have to satisfy a specific $7$-party inequality, namely,
\begin{equation}\label{eq:7mmi}
    \S{ABDG}+\S{ACEG}+\S{BCFG}\ge \S{A}+\S{B}+\S{C}+\S{ABCDEFG}.
\end{equation}
More generally, we will see that any given $n$-party holographic entropy inequality holds for $\SQES$ so long as the bulk state satisfies a certain $n'$-party inequality for $n'\gg n$, where $n'$ is generally doubly-exponential in $n$.
To make precise the relation between bulk and boundary entropy constraints, here we prove the following two important results:
\begin{itemize}
    \item \textbf{Result 1}: \quad Bulk HEC $\implies$ Boundary HEC
    \item \textbf{Result 2}: \quad Bulk MMI $\implies$ Boundary MMI
\end{itemize}

In more detail, \textbf{Result 1} states that the HEC-constrained QES cone is the same as the HEC.
We motivate this result in Sec.~\ref{sec:double} by using the idea of double holography, described in Ref.~\cite{Almheiri:2019hni,Almheiri:2019psy}.
Intuitively, a doubly holographic setup ensures that although the boundary entropies computed using the QES formula receive a large contribution from the bulk entropy, they can secretly be thought of as being computed by an area in a higher-dimensional bulk.
The standard arguments for proving HEC inequalities obeyed by the RT formula can then be generalized, elucidating a connection between the bulk HEC and the boundary HEC.

Taking inspiration from the lessons learned from double holography, we then prove \textbf{Result 1} formally in Sec.~\ref{sec:contraction} by employing the formalism of contraction maps pioneered by Ref.~\cite{Bao:2015bfa} (see also Ref.~\cite{Avis:2021xnz} for more details). 
After reviewing the formalism in Sec.~\ref{sec:contreview}, we prove \textbf{Result 1} in Sec.~\ref{sec:proofHEC}.
The proof involves an elegant reinterpretation of the original contraction maps used to prove boundary HEC inequalities in the context of the RT formula. Despite the way we motivate it, we emphasize that \textbf{Result 1} is independent from double holography and applies more generally to any holographic field theory.

In Sec.~\ref{sec:MMI}, we then initiate a more controlled analysis of the implications of bulk entropy constraints on the boundary state. In particular, we consider the situation in which we constrain the bulk entropy to obey MMI for arbitrary subregions, which defines the MMI-constrained QES cone.
Remarkably, we find that this suffices for the boundary state to also satisfy MMI for arbitrary subregions, thereby proving \textbf{Result 2}.
In other words, the aforementioned $7$-party inequality, \Eqref{eq:7mmi}, is not only a HEC inequality (as implied by \textbf{Result 1}), but moreover it is weak in the sense that it is implied already by the $3$-party MMI inequality.
Thus, we see that the MMI-constrained QES cone is at least as small as the cone obtained by imposing all instances of MMI on quantum entropies, i.e., the MMI cone.

Finally, we conclude in Sec.~\ref{sec:disc} with a summary of our results and a discussion of future directions of the research program we have initiated here.

\section{Motivation: Double Holography} \label{sec:double}

In this section we motivate our first result, namely, the fact that the bulk HEC implies the boundary HEC.
To do so, we consider a doubly holographic setup where the $d$-dimensional gravity dual to a holographic quantum system itself has an effective holographic description in a $(d+1)$-dimensional theory of gravity.
This setup has been exploited to study questions involving strong quantum effects\footnote{Most notably, in the derivation of the unitary Page curve of black hole evaporation.} in $d$-dimensional quantum gravity, which can be comfortably analyzed by working with classical gravity on a $(d+1)$-dimensional bulk~\cite{Emparan:2006ni,Myers:2013lva,Almheiri:2019hni,Rozali:2019day,Chen:2020uac,Chen:2020hmv,Hernandez:2020nem,Chandrasekaran:2020qtn,Bousso:2020kmy,Neuenfeld:2021bsb,Geng:2020qvw}.

Consider a $d$-dimensional holographic boundary conformal field theory (BCFT).
The bulk dual to such a theory was described in Refs.~\cite{Takayanagi:2011zk,Fujita:2011fp} and involves a $(d+1)$-dimensional bulk with an end-of-the-world (EOW) brane anchored to the boundaries of the BCFT.
In fact, such a holographic theory involves two layers of holography and thus admits essentially three descriptions, as seen in Fig.~\ref{fig:doublehol}:
\begin{itemize}
    \item {\bf Description 1}: A UV complete, $d$-dimensional BCFT where the CFT couples to $(d-1)$-dimensional boundary defects which admit a holographic description.
    \item {\bf Description 2}: An effective $d$-dimensional theory where the CFT now couples to a $d$-dimensional gravitating brane replacing the boundary defects of Description 1.
    \item {\bf Description 3}: A $(d+1)$-dimensional theory of gravity with an EOW brane.
\end{itemize}

\begin{figure}[ht!]
    \centering
    \includegraphics[width=.8\textwidth]{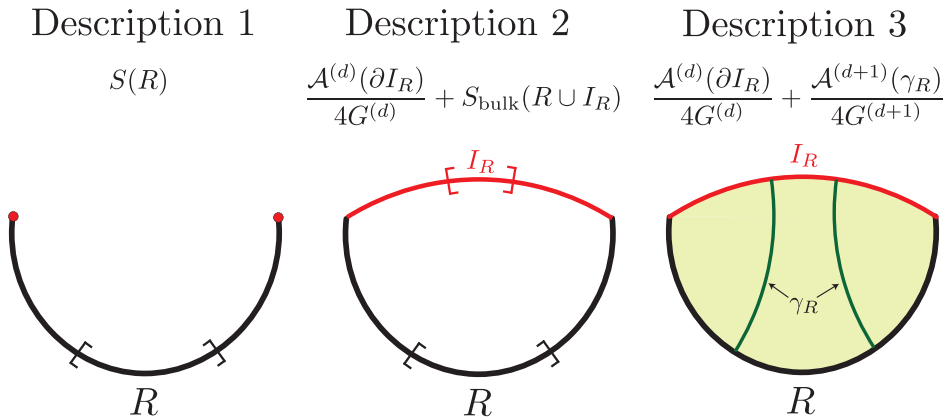}
    \caption{The three descriptions in the doubly holographic setup and their respective calculations for the entropy of subregion $R$. Description 1: A $d$-dimensional BCFT with a subregion $R$. Description 2: A $d$-dimensional CFT coupled to gravity on a $d$-dimensional brane (red); the entropy is computed using the QES formula including contributions from the island $I_R$. Description 3: A $(d+1)$-dimensional bulk with an EOW brane; the entropy is computed using the RT surface $\gamma_R$ (green) which is anchored to subregion $R$ and its island $I_R$. Note that we have assumed the RT surface to be connected in Description 3, corresponding to $R$ having a nontrivial island in Description 2.}
    \label{fig:doublehol}
\end{figure}

As described in Ref.~\cite{Almheiri:2019hni,Almheiri:2019psy}, the entropy of a subregion $R$ of the BCFT in Description 1 can be related to quantities in the other descriptions using the QES formula (see Fig.~\ref{fig:doublehol}),
\begin{subequations}
\label{eq:doubleQES}
\begin{align}
\label{eq:doubleQES2}
    \SQES(R) &= \min \left\{\underset{I_R}{\ext} \; \frac{\mathcal{A}^{(d)}(\partial I_R)}{4 G^{(d)}} + \Sbulk(R \cup I_R) \right\}\\
\label{eq:doubleQES3}
    &\approx \min \left\{ \underset{\gamma_R}{\ext} \; \frac{\mathcal{A}^{(d)}(\partial I_R)}{4 G^{(d)}} + \frac{\mathcal{A}^{(d+1)}(\gamma_R)}{4 G^{(d+1)}} \right\}.
\end{align}
\end{subequations}
Here, the first line is a computation in Description 2 using the QES formula and $I_R$ can be thought of as an island of $R$, living on the $d$-dimensional brane. Equivalently, the entropy of $R$ could be computed using the quantum maximin procedure which gives the same answer as \Eqref{eq:doubleQES2}~\cite{Akers:2019lzs}.
The second line is a computation in Description 3 using the simpler RT formula (i.e. neglecting any quantum corrections), where the bulk entropy term $\Sbulk(R \cup I_R)$ can be calculated using the $(d+1)$-dimensional area of a surface $\gamma_R$ homologous to $R \cup I_R$.

By construction, it is clear that all entropies computed in Description 2, including those for regions on the brane, obey all HEC inequalities. This is because this ``bulk entropy" can be effectively computed in Description 3 by the RT formula, which satisfies the HEC as defined in Ref.~\cite{Bao:2015bfa}. Hence we say that the ``bulk HEC'' is obeyed.
Now, we would like to consider entropies in Description 1, i.e., those given by the full expression in \Eqref{eq:doubleQES2}, and show that these also lie within the HEC.
This should be understood as the ``boundary HEC'', which explicitly coexists with the bulk HEC in this setup through \Eqref{eq:doubleQES}.

\begin{figure}
    \centering
    \includegraphics[width=.5\textwidth]{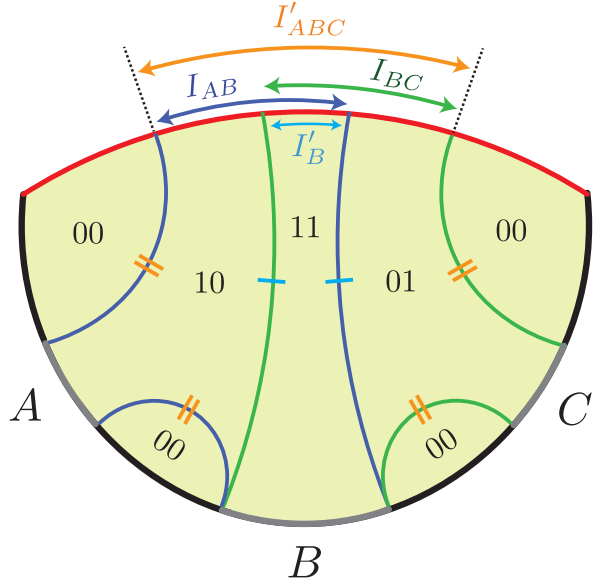}
    \caption{The entropies of subsystems $AB$ and $BC$ are computed by RT surfaces $\gamma_{AB}$ (blue) and $\gamma_{BC}$ (green), together with brane contributions coming from the boundaries of their associated islands on the brane, $I_{AB}$ and $I_{BC}$, respectively. The homology regions generating these RT surfaces correspond to unions of bulk regions, namely: $\gamma_{AB}$ is obtained from the boundary of $10$ and $11$, while $\gamma_{BC}$ comes from $11$ and $01$.
    Similarly, the islands whose boundaries contribute to the relevant entropies can be formed from analogous unions of brane regions which carry the same labels as their adjacent bulk regions.
    All these bulk and brane regions can be rearranged into homology regions and islands for the subsystems $B$ and $ABC$. In particular, $11$ alone gives a surface $\gamma_{B}'$ (single line) homologous to $B$ with island $I_{B}'$, and the union of $10$, $11$ and $01$ gives a surface $\gamma_{ABC}'$ (double line) homologous to $ABC$ with island $I_{ABC}'$. Thus the same bulk and brane labels allow one to keep track of homology regions and islands for left-hand and right-hand side subsystems in \Eqref{eq:SSA}.}
    \label{fig:braneSSA}
\end{figure}

Since the general idea will go through for all holographic inequalities in a similar fashion, we focus on the simple example of proving strong subadditivity (SSA) for subregions in Description 1.
Given subregions $A$, $B$ and $C$ in Description 1, we would like to show that
\begin{align}\label{eq:SSA}
        \S{AB}+\S{BC} \ge \S{B}+\S{ABC}.
\end{align}
A useful guide to the following calculation is provided by Fig.~\ref{fig:braneSSA}.
We can compute the left-hand side using \Eqref{eq:doubleQES} to obtain
\begin{align}\label{eq:SSAproof1}
    \S{AB}+\S{BC} &= \frac{\mathcal{A}^{(d)}(\partial I_{AB})}{4 G^{(d)}} + \Sbulk(AB \cup I_{AB}) +\frac{\mathcal{A}^{(d)}(\partial I_{BC})}{4 G^{(d)}} + \Sbulk(BC \cup I_{BC})\\
    &= \frac{\mathcal{A}^{(d)}(\partial I_{AB})}{4 G^{(d)}} + \frac{\mathcal{A}^{(d+1)}(\gamma_{AB})}{4 G^{(d+1)}} +\frac{\mathcal{A}^{(d)}(\partial I_{BC})}{4 G^{(d)}} + \frac{\mathcal{A}^{(d+1)}(\gamma_{BC})}{4 G^{(d+1)}},
\end{align}
where $\gamma_{AB}$ and $\gamma_{BC}$ are respectively RT surfaces anchored to $AB\cup I_{AB}$ and $BC\cup I_{BC}$.
These RT surfaces divide up the $(d+1)$-dimensional bulk and $d$-dimensional brane into four classes of regions.
Each region can be labelled by a bitstring which encodes inclusion / exclusion with respect to the homology regions and islands of $AB$ and $BC$ in the form of binary variables~\cite{Bao:2015bfa}. For instance, the region labelled by $10$ in Fig.~\ref{fig:braneSSA} receives this bitstring label because it is included in the homology region of $AB$ (bounded by blue lines) but excluded from that of $BC$ (bounded by green lines).\footnote{This notation will be described in more detail in Sec.~\ref{sec:contreview}, as it is crucial for a general discussion of the arguments leading to the holographic entropy inequalities.}
The labelling of $(d+1)$-dimensional bulk regions in Description 3 is associated to a corresponding labelling of $d$-dimensional brane regions in Description 2 -- in Fig.~\ref{fig:braneSSA}, brane regions carry the same labels as their adjacent bulk regions.

One can now reproduce the cut-and-paste procedure prescribed in Ref.~\cite{Bao:2015bfa}, now for both RT and island contributions, in terms of these labelled bulk and brane regions -- see Fig.~\ref{fig:braneSSA} for more details. In particular, the RT surfaces and island boundaries can be rearranged into a different set of surfaces, $\gamma_{B}'$ and $\gamma_{ABC}'$, and island boundaries, $\partial I_{B}'$ and $\partial I_{ABC}'$, now homologically associated to subsystems $B$ and $ABC$, respectively.
By the minimization in the QES formula, we are guaranteed that these will obey
\begin{align}\label{eq:SSAproof2}
    \S{AB}+\S{BC} &=  \frac{\mathcal{A}^{(d)}(\partial I_{B}')}{4 G^{(d)}} + \frac{\mathcal{A}^{(d+1)}(\gamma_{B}')}{4 G^{(d+1)}} +\frac{\mathcal{A}^{(d)}(\partial I_{ABC}')}{4 G^{(d)}} + \frac{\mathcal{A}^{(d+1)}(\gamma_{ABC}')}{4 G^{(d+1)}}\\
    &\geq \S{B}+\S{ABC},
\end{align}
thereby proving SSA.

Thus, by virtue of double holography, we see that $d$-dimensional and $(d+1)$-dimensional contributions can be handled in a coordinated way using standard cut-and-paste arguments. Respectively, these contributions can be seen as being associated to Descriptions 2 and 3.
From the perspective of Description 3, the above argument can be understood as a simple generalization of the arguments in Ref.~\cite{Bao:2015bfa} with the inclusion of island boundary terms of the form $\frac{\mathcal{A}^{(d)}(\partial I)}{4 G^{(d)}}$ in the RT formula.
Alternatively, from the perspective of Description 2, it can be understood as a generalization where the islands associated to the QES surfaces include bulk entropy contributions of a holographic type, behaving as $(d+1)$-dimensional contributions growing off of the islands.
Since Description 3 will no longer be available when abstracting away from double holography, the latter perspective will be the more useful one in what follows.
Indeed, the basic strategy of handling bulk entropy contributions in terms of bitstring labels induced by QES surfaces will still be a powerful one and serve as the cornerstone for proving {\bf Result 1} in Sec.~\ref{sec:contraction}.

Although we considered a time-independent situation, where the cut-and-paste technique of Ref.~\cite{Bao:2015bfa} can be directly applied, SSA and MMI can be proved in time-dependent settings.
For these inequalities, one can use the maximin techniques of Refs.~\cite{Wall:2012uf,Akers:2019lzs}, but now with the inclusion of extra boundary terms.\footnote{Note that there are inequalities that are not provable using the techniques of maximin~\cite{Rota:2017ubr}.}

To summarize, we have exemplified how the usual constructs for holographic entropy inequalities can be applied to both bulk and boundary entropies in a coordinated way in the setting of double holography. Within this framework, it becomes clear that the bulk HEC holds (by double holography) and so does the boundary HEC (because island contributions do not spoil the usual cut-and-paste arguments). 

In fact, it is plausible that requiring the bulk HEC to hold for arbitrary subregions (including those on the brane), could imply the existence of a higher-dimensional geometric description where the entropies are computed using the RT formula.\footnote{In certain cases, uniqueness of such a geometry would be guaranteed by the results of Refs.~\cite{Bao:2019bib,Bao:2020abm}.}
This in turn would imply the boundary HEC by using the corresponding doubly holographic picture.
Nevertheless, the construction of such a holographic dual given the boundary entropies would be a nontrivial task, if possible at all.
Thus, in Sec.~\ref{sec:contraction}, we bypass the usage of double holography as an intermediate tool.
Instead, we elucidate a more direct mechanism by which the bulk HEC implies the boundary HEC in full generality.

\section{Holographic Entropy Inequalities} \label{sec:contraction}

In Sec.~\ref{sec:contreview}, we review the proof technique of contraction maps utilized in Ref.~\cite{Bao:2015bfa} to prove holographic entropy inequalities.
In particular, we describe this method without resorting to graph theory, in a geometric language that we believe should be more transparent for the AdS/CFT community.
In Sec.~\ref{sec:proofHEC}, we prove \textbf{Result 1}, i.e., the fact that the bulk HEC implies the boundary HEC, using a novel application of contraction maps.

\subsection{Proof-by-Contraction} \label{sec:contreview}

The \textit{proof-by-contraction} method, originally developed by Ref.~\cite{Bao:2015bfa}, is a combinatorial method for proving holographic entropy inequalities when using the RT formula. This technique has by now been extended in various directions (see Ref.~\cite{Avis:2021xnz} for an upgraded version and Ref.~\cite{Bao:2020zgx} for a generalization), but always formalized in the language of graph theory. Since graph models of holographic entanglement\footnote{See Ref.~\cite{Bao:2015bfa} for more details on these.} have not played any role in our discussion so far, we will refrain from introducing them at this point. Instead, we now proceed to present the proof-by-contraction technique in a general geometric setting, which we believe will provide an explanation of its inner workings that feels more natural to holographers.

This proof method is a generalization of a strategy first employed in Ref.~\cite{Headrick:2007km} for a holographic proof of SSA, and later applied in Ref.~\cite{Hayden:2011ag} to the proof of MMI. In these first appearances, the setting was in fact geometric, so it is natural to try to reformulate the proof-by-contraction method in geometric terms. Hence, it will be useful to reproduce these early arguments by carrying along the proof of MMI as an example.

Without loss of generality, our MMI example can be described on the geometric configuration shown in Fig.~\ref{fig:mmilhs}. We emphasize though that the proof-by-contraction method does not assume anything about the underlying geometry, other than that it be a Riemannian manifold (e.g. arising from a time-symmetric slice of a Lorentzian manifold where the RT formula applies) with standard or asymptotic codimension-$1$ boundaries.\footnote{Geometries may have arbitrarily complicated topology, be non-smooth, involve multiple connected components, have any number of standard manifold or asymptotic boundaries, or even any kind of asymptotics, not necessarily AdS.
Similarly, boundary regions may consist of multiple connected components as well, be adjoining or not, be compact or not, or cover entire boundaries. In a holographic context, it may be important to restrict to the boundary being a convex surface~\cite{Sanches:2016sxy,Nomura:2018kji}.}

\begin{figure}
    \centering
    \includegraphics[width=.5\textwidth]{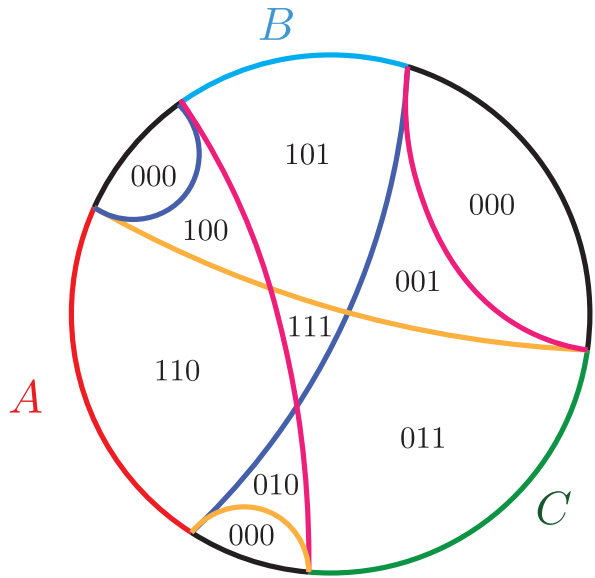}
    \caption{Example of a $3$-party holographic configuration illustrating the proof-by-contraction method for MMI. The relevant parties are $A$, $B$ and $C$, with boundary regions colored in red, light blue and green, respectively -- their complement or purifier is shown in black. For the three terms on the left-hand side of the MMI inequality, $AB$, $AC$ and $BC$, the RT surfaces are represented by the interior curves in blue, yellow and pink, respectively. The resulting homology regions partition the bulk geometry into subregions labelled by $2^3$ different bitstrings in $\{0,1\}^3$. These bitstrings are specified by inclusion/exclusion as explained in the main text.}
    \label{fig:mmilhs}
\end{figure}

A general candidate inequality will be canonically written as 
\begin{equation}
\label{eq:ineq}
    \sum_{l=1}^L \alpha_l S_{I_l} \ge \sum_{r=1}^R \beta_r S_{J_r},
\end{equation}
such that the coefficients $\alpha_l$ and $\beta_r$ are all positive, and $L$ and $R$ are the number of entropy terms appearing on each side. For an inequality written as in \Eqref{eq:ineq} to hold, it must be the case that its left-hand side (LHS) is guaranteed to be no smaller than its right-hand side (RHS) in any imaginable configuration. The basic idea of the proof-by-contraction is to devise an organizing principle that allows one to compare LHS terms to RHS ones in full generality. This turns out to be amenable to a combinatorial formulation which can be set up step-wise as follows:
\begin{enumerate}
    \item\label{s1} Collect the RT surfaces and their corresponding homology regions for all $L$ subsystems appearing on the LHS of the given inequality.
    \item\label{s2} Partition the bulk into the $2^L$ subregions allowed by inclusion/exclusion inside each of the $L$ homology regions (some subregions may be empty).
    \item\label{s3} Use the inclusion/exclusion subregions to reconstruct the LHS homology regions, and utilize them suitably to construct candidate homology regions for RHS subsystems -- the latter will generically not be associated to minimal RT surfaces.
    \item\label{s4} Express areas of surfaces bounding homology regions in terms of the subregions contributing to them. For the LHS, these will yield back entropy areas, whereas for the RHS this will yield some possibly nonminimal areas.
    \item\label{s5} Devise a diagnostic to compare the LHS entropy areas to the resulting RHS areas associated to the so-constructed homology regions.
\end{enumerate}
We now describe each of these steps in detail, referring to the MMI example when helpful.

Step \ref{s1} requires us to compute the RT surfaces associated to all subsystem entropies on the LHS of the inequality. For instance, for MMI these are the interior blue, yellow and pink curves in Fig.~\ref{fig:mmilhs}. If $\gamma_I$ is the RT surface of the region $R_I$, let $\sigma_I$ be its corresponding homology region satisfying $\partial\sigma_I = \gamma_I \cup R_I$. We will take these homology regions $\sigma_I$ to be open sets. Each LHS subsystem $I_l$ has an associated $\sigma_{I_l}$ subregion.

For step \ref{s2}, we will use these $\sigma_{I_l}$ to discretize the geometry as follows. Any bulk point $p\in M$ can be classified by whether it is inside or outside of $\sigma_I$ for every LHS subsystem $I$, a binary choice that has to be made $L$ times, one for each such term. Length-$L$ bitstrings $x\in\{0,1\}^L$ can thus be used to encode all regions in $M$ where points can be located in terms of LHS homology regions -- see Fig.~\ref{fig:mmilhs}. In particular, these subsets obtained by inclusion/exclusion in homology regions can be written as
\begin{equation}
\label{eq:sigmaxs}
    \sigma(x) = \bigcap_{l=1}^L \sigma_{I_l}^{x_l}, \qquad \sigma_I^b =
    \begin{cases}
        \sigma_I \quad &\text{if} \quad b=1, \\
        \sigma_I^\complement \quad &\text{if} \quad b=0. \\
    \end{cases}
\end{equation}
These $\sigma(x)$ sets partition $M$ into $2^L$ disjoint subregions, some of which may be empty or consist of multiple connected components as shown in Fig.~\ref{fig:mmilhs}.

Step \ref{s3} employs the subregions in \Eqref{eq:sigmaxs} to build back the LHS homology regions. Namely, since by construction $\sigma(x)$ is a nonempty subset of $\sigma_{I_l}$ if and only if $x_l=1$,
\begin{equation}
\label{eq:lhshomo}
    \sigma_{I_l} = \bigcup_{x:x_l=1} \sigma(x).
\end{equation}
For example, the homology region for $AB$ in Fig.~\ref{fig:mmilhs} consists of subregions labelled by $110$, $100$, $111$ and $101$, all with $x_A=1$, as required.

Crucially, these $\sigma(x)$ subregions can also be utilized to construct homology regions for the RHS subsystems. For a region $\sigma_{J_r} \subseteq M$ made up from $\sigma(x)$ subregions to be a valid homology region for a boundary region $R_{J_r}$, there are certain $\sigma(x)$ which one is obliged to include/exclude. In particular, it must be the case that $\sigma(x)\subseteq\sigma_{J_r}$ whenever $\sigma(x)$ is adjacent to $R_{J_r}$, and $\sigma(x) \not \subseteq\sigma_{J_r}$ whenever $\sigma(x)$ is only adjacent to other boundary regions $R_{J_r'}\ne R_{J_r}$. Once these constraints are satisfied, one is free to include any other $\sigma(x)$ in $\sigma_{J_r}$. 

To make these conditions more explicit, it will be convenient to think of general pure states on a set of parties $[n+1]$ where $n+1$ labels the complement to all $n$ parties.\footnote{Recall the notation $[n+1]\equiv \{1,2,\dots,n+1\}$, and the standard identification $1\leftrightarrow A$, $2\leftrightarrow B$, \dots, where for the purifier we reserve $n+1\leftrightarrow O$.}
The desired constraint for valid homology can then be phrased in terms of so-called \textit{occurrence bitstring}. 
Such a bitstring $x^{(i)}$ for a party $i\in[n+1]$ is defined bit-wise by a binary Boolean function that determines the set of LHS subsystems $I_l$ in which party $i$ occurs, i.e.,
\begin{equation}
\label{eq:occur}
    x^{(i)}_l = \delta(i\in I_l).
\end{equation}
In words, $x^{(i)}$ is the label specifying the bulk region $\sigma(x^{(i)})$ adjacent to the boundary region labelled by $i$.
In the MMI example, we get $x^{A}=110$ (because $A$ occurs in $AB$ and $AC$, but not in $BC$), and we see that the bulk region $\sigma(x^{A})$ this bitstring labels is indeed the one adjacent to boundary region $A$ in Fig.~\ref{fig:mmilhs}.
For each $i\in[n+1]$, these are the relevant bulk subregions involved in the homology constraint discussed above. Hence, this condition becomes the requirement that for occurrence bitstrings $\sigma(x^{(i)})\subseteq \sigma_{J_r}$ if and only if $i\in J_r$ -- for any other bitstring one is free to choose whether to include $\sigma(x)$ as part of $\sigma_{J_r}$ or not. These conditions and choices can all be captured by introducing a general map $f:\{0,1\}^L\to\{0,1\}^R$
specifying RHS homology regions via (cf. \Eqref{eq:lhshomo} for the LHS)
\begin{equation}
\label{eq:rhshomo}
    \sigma_{J_r} = \bigcup_{x:f(x)_r=1} \sigma(x),
\end{equation}
and subject to a constraint on all occurrence bitstrings for $i\in[n+1]$ of the form
\begin{equation}
\label{eq:occurcond}
    f(x^{(i)})_r = \delta(i\in J_r),
\end{equation}
guaranteeing that \Eqref{eq:rhshomo} is a valid homology region for $J_r$. The freedom in including other $\sigma(x)$ subsets in $\sigma_{J_r}$ corresponds to varying the map $f$ while preserving \Eqref{eq:occurcond}. As noted previously, the RHS homology region $\sigma_{J_r}$ resulting from a choice of map $f$ will generically not be bounded by a minimal surface for $J_r$ in the RT sense. For instance, applying \Eqref{eq:rhshomo} to the RHS subsystem $J_4=ABC$ in the MMI inequality, the constraint \Eqref{eq:occurcond} would instruct us to include $\sigma(x^{(i)})\subseteq\sigma_{ABC}$ for all $i\in[3]$, but not for $i=4$ (since $O$ does not appear in $ABC$). In Fig.~\ref{fig:mmilhs}, this means $\sigma_{ABC}$ must include regions labelled by $110$, $101$ and $011$, and exclude the ones labelled by $000$, which we see makes perfect sense as minimal requirements for any $ABC$ homology region built out of $\sigma(x)$ subregions. In addition, $f(x)_4$ would be free to take any value for any non-occurrence bitstring, e.g. $f(010)_4=1$ and $f(111)_4=0$ would respectively correspond to having $\sigma(010)\subseteq\sigma_{ABC}$ and $\sigma(111)\not\subseteq\sigma_{ABC}$.

Having understood how bulk subregions $\sigma(x)$ can be used to construct homology regions for both LHS and RHS subsystems, we can now proceed to step \ref{s4}. The crucial observation is that this partitioning of $M$ into $\sigma(x)$ subregions also induces a discretization of the RT surfaces which will allow for a cut-and-paste argument to compare areas of LHS and RHS surfaces. The subsets in \Eqref{eq:sigmaxs} can be used to chop RT surfaces into pieces by adjacency using the following object,
\begin{equation}
\label{eq:adjacent}
    \gamma(x,x') = \overline{\sigma(x)} \bigcap_{\text{codim-$1$}} \overline{\sigma(x')},
\end{equation}
where overlines denote closures and we have defined an intersection operator which yields the empty set if and only if the intersection of the two sets is of codimension higher than $1$. In other words, $\gamma(x,x')$ will be empty unless the subsets $\sigma(x)$ and $\sigma(x')$ are adjacent to each other along a codimension-$1$ surface, so $\gamma(x,x')$ will always be a subregion of an RT surface lying between bulk regions labelled by $x$ and $x'$.\footnote{At first sight, it could naively seem that two bitstrings $x$ and $x'$ will be labelling adjacent regions if and only if they differ by a single bit. This is in fact not true: bitstrings $x$ and $x'$ differing in multiple bits may be labelling adjacent regions if different subsystems share connected components of RT surfaces. In such cases, crossing the relevant RT surface would be associated to flipping more than one bit.}
Furthermore, we can identify which RT surface $\gamma(x,x')$ is part of, namely: $\gamma(x,x')$ is a nontrivial portion of $\gamma_{I_l}$ if and only if $x$ and $x'$ differ in the $l^{\text{th}}$ bit $x_l$ (i.e., they lie on opposite sides of $\gamma_{I_l}$) and $\gamma(x,x')\ne\varnothing$ (i.e., $\sigma(x)$ and $\sigma(x')$ are adjacent).

The power of this is that we can now reconstruct the RT surface of any LHS subsystem $I_l$ by piecing together appropriate portions $\gamma(x,x')$ by iterating over bulk subregions through
\begin{equation}
\label{eq:gammapiec}
    \gamma_{I_l} = \bigcup_{x,x':x_l \ne x_l'} \gamma(x,x').
\end{equation}
Furthermore, since no two $\gamma(x,x')$ have a codimension-$1$ intersection, \Eqref{eq:gammapiec} can also be used to compute entropies via the RT formula in \Eqref{eq:RT},
\begin{equation}
\label{eq:sibulkxx}
    \SRT(I_l) = \frac{1}{4G}\sum_{x,x'} \abs{x_l - x_l'} \mathcal{A}\left[\gamma(x,x')\right],
\end{equation}
where $\abs{x_l - x_l'}$ is a convenient indicator function implementing the condition that $x_l\ne x_l'$ algebraically. Notice that all we needed to obtain \Eqref{eq:sibulkxx} was a specification of the homology region of $I_l$ in terms of bulk subregions $\sigma(x)$. We have the same ingredients for the RHS subsystems from \Eqref{eq:rhshomo}, except the homology regions we can construct for RHS subsystems this way are not guaranteed to be bounded by surfaces of minimal area. This means that the homology regions we build for the RHS will yield
\begin{equation}
\label{eq:sjbulkxx}
    \SRT(J_r) \le \frac{1}{4G}\sum_{x,x'} \abs{f(x)_r - f(x)_r'} \mathcal{A}\left[\gamma(x,x')\right].
\end{equation}

We are now in a position to complete the argument with step \ref{s5} by comparing LHS and RHS terms in a candidate holographic inequality as written in \Eqref{eq:ineq}. It will be convenient to introduce the notion of a \textit{weighted Hamming distance} $d_w$ which, given a weight vector $w\in\mathbb{R}^m$, for any pair of bitstrings $y,y'\in\{0,1\}^m$, is defined as
\begin{equation}
    d_w(y,y') = \sum_{k=1}^m w_k \abs{y_k-y_k'}.
\end{equation}
With this notation in hand, and using \Eqref{eq:sibulkxx}, the LHS of \Eqref{eq:ineq} translates into
\begin{equation}
\label{eq:lhsend}
    \sum_{l=1}^L \alpha_l \SRT(I_l) = \frac{1}{4G}\sum_{x,x'} d_\alpha(x,x') \mathcal{A}\left[\gamma(x,x')\right],
\end{equation}
whereas using \Eqref{eq:sjbulkxx}, the RHS of \Eqref{eq:ineq} obeys
\begin{equation}
\label{eq:rhsend}
    \frac{1}{4G}\sum_{x,x'} d_\beta(f(x),f(x')) \mathcal{A}\left[\gamma(x,x')\right] \ge \sum_{r=1}^R \beta_r \SRT(J_r).
\end{equation}
A successful comparison between \Eqref{eq:lhsend} and \Eqref{eq:rhsend} that proves \Eqref{eq:ineq} would correspond to establishing that
\begin{equation}
\label{eq:comp}
    \sum_{x,x'} d_\alpha(x,x') \mathcal{A}\left[\gamma(x,x')\right] \ge \sum_{x,x'} d_\beta(f(x),f(x')) \mathcal{A}\left[\gamma(x,x')\right].
\end{equation}
An obvious sufficient condition\footnote{At face value this condition seems too strong to be necessary. Although this intuition turns out to be correct for \Eqref{eq:cont1} as written, the question becomes more subtle when $\beta$ is made into an all-$1$ vector by simply expanding RHS terms in \Eqref{eq:ineq} into multiple copies of unit coefficient. Upon this innocuous-looking manipulation, it is believed that \Eqref{eq:cont1} becomes a necessary condition -- see~\cite{Avis:2021xnz} for more details on this.} for this to hold is that $f$ be a \textit{contraction map} for the distance functions $d_\alpha$ and $d_\beta$, i.e., that for every $x,x'\in\{0,1\}^L$,
\begin{equation}
\label{eq:cont1}
    d_\alpha(x,x') \ge d_\beta(f(x),f(x')).
\end{equation}
This whole discussion gives rise to the desired proof-by-contraction method, which can finally be stated as follows:
\begin{nthm}[Proof-by-contraction]
\label{thm:pbc}
    An inequality of the form of \Eqref{eq:ineq} holds for the RT formula if there exists a contraction map $f:\{0,1\}^L\to\{0,1\}^R$ for the weighted Hamming distances $d_\alpha$ and $d_\beta$ obeying the occurrence bitstring conditions in \Eqref{eq:occurcond} for all $i\in[n+1]$.
\end{nthm}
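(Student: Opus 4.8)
The plan is to assemble the ingredients developed in Steps~\ref{s1}--\ref{s5} into a single chain of (in)equalities, valid on an arbitrary background geometry. Suppose we are given a map $f:\{0,1\}^L\to\{0,1\}^R$ obeying the occurrence-bitstring conditions \Eqref{eq:occurcond} for all $i\in[n+1]$ and the contraction condition \Eqref{eq:cont1}. Fix any Riemannian geometry $M$ carrying the boundary regions $R_{I_l}$ and $R_{J_r}$. First I would run Steps~\ref{s1}--\ref{s2}: compute the RT surfaces $\gamma_{I_l}$ together with their open homology regions $\sigma_{I_l}$, and form the $2^L$ inclusion/exclusion subregions $\sigma(x)$ of \Eqref{eq:sigmaxs}. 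This partition of $M$ simultaneously cuts the RT surfaces into the codimension-$1$ pieces $\gamma(x,x')$ of \Eqref{eq:adjacent}; one checks that these pieces tile $\bigcup_l\gamma_{I_l}$ with overlaps of codimension at least $2$, so their areas simply add, and the identity \Eqref{eq:gammapiec} together with the RT formula gives \Eqref{eq:sibulkxx} for each LHS term and hence \Eqref{eq:lhsend}.

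Next I would construct the RHS homology regions from $f$ and bound their areas. Define $\sigma_{J_r}$ by \Eqref{eq:rhshomo}. The crux here is to verify that \Eqref{eq:occurcond} is precisely the condition that each $\sigma_{J_r}$ is an admissible homology region for $R_{J_r}$: the bulk subregion touching the boundary component labelled by party $i$ is exactly $\sigma(x^{(i)})$, so requiring $f(x^{(i)})_r=\delta(i\in J_r)$ forces $\sigma_{J_r}$ to contain every $\sigma(x)$ adjacent to $R_{J_r}$ and to omit every $\sigma(x)$ adjacent only to other boundary regions, while leaving the purely interior $\sigma(x)$'s free. Because $\sigma_{J_r}$ is a union of $\sigma(x)$-blocks, its bounding surface is the union of pieces $\gamma(x,x')$ with $f(x)_r\neq f(x')_r$; it is homologous to $R_{J_r}$ but generically not minimal, so the RT formula yields the inequality \Eqref{eq:sjbulkxx}, and summing over $r$ with weights $\beta_r$ produces \Eqref{eq:rhsend}.

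Finally I would close the argument. It remains to establish \Eqref{eq:comp}, and since every $\mathcal{A}[\gamma(x,x')]\geq0$ it is enough to compare coefficients pairwise, i.e., to have $d_\alpha(x,x')\geq d_\beta(f(x),f(x'))$ for all $x,x'\in\{0,1\}^L$ --- which is exactly the contraction hypothesis \Eqref{eq:cont1}. Chaining \Eqref{eq:lhsend}, \Eqref{eq:comp}, and \Eqref{eq:rhsend} gives $\sum_l\alpha_l\SRT(I_l)\ge\sum_r\beta_r\SRT(J_r)$, which is \Eqref{eq:ineq}; since $M$ was arbitrary, \Eqref{eq:ineq} holds for the RT formula.

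The only genuinely delicate step is the homology bookkeeping of the middle paragraph: making rigorous that adjacency of a subregion $\sigma(x)$ to a given boundary component is controlled by the occurrence bitstring $x^{(i)}$, and that no pathology --- disconnected $\sigma(x)$'s, adjacencies realized by flipping several bits at once when distinct $\gamma_{I_l}$ share connected components (the subtlety flagged after \Eqref{eq:adjacent}), or non-smooth loci --- spoils either the additivity of areas underlying \Eqref{eq:sibulkxx} or the validity of the homology regions \Eqref{eq:rhshomo}. Everything past that point is a direct substitution together with the use of $\mathcal{A}\geq0$.
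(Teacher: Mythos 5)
Your proposal is correct and follows essentially the same route as the paper: the theorem there is stated as the culmination of Steps~\ref{s1}--\ref{s5}, and your chain --- partition into $\sigma(x)$, area decomposition \Eqref{eq:sibulkxx}/\Eqref{eq:lhsend}, occurrence-constrained RHS homology regions giving \Eqref{eq:sjbulkxx}/\Eqref{eq:rhsend}, and the coefficient-wise comparison \Eqref{eq:comp} from \Eqref{eq:cont1} using $\mathcal{A}\ge0$ --- is exactly the paper's argument. You also correctly identify the same delicate points (validity of the homology regions and multi-bit adjacencies) that the paper itself flags.
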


We exemplify this proof method in Tab.~\ref{tab:mmi} by exhibiting a contraction map for the MMI inequality given in \Eqref{eq:mmi}. One can visualize the cut-and-paste strategy that this contraction map encodes as follows (see Fig.~\ref{fig:mmirhs}). Every bitstring in the domain $\{0,1\}^3$ of Tab.~\ref{tab:mmi} labels a distinct (possibly disconnected) region in Fig.~\ref{fig:mmilhs}, as specified by RT surfaces of LHS subsystems through \Eqref{eq:sigmaxs}. The contraction map $f$ is then used to form homology regions for RHS subsystems as in \Eqref{eq:rhshomo}. For instance, we see in Tab.~\ref{tab:mmi} that for $J_1=A$, $f(x)_1=1$ only for $x=110$. This means $\sigma_A = \sigma(110)$, the minimal homology region one can form for $A$ as given by the occurrence bitstring, which results in the non-minimal surface shown on the left of Fig.~\ref{fig:mmirhs}. In contrast, for $J_4=ABC$ the contraction map instructs us to include every single bulk region other than the one homologous to $O$, yielding now the minimal surface shown on the right of Fig.~\ref{fig:mmirhs}. One can easily check that no other choice of images for $f$ would obey the contraction property.

\begin{table}[ht!]
\setlength{\tabcolsep}{.3cm}
	\centering
	\begin{tabular}{c || c | c | c||c | c | c | c}
		& AB & AC & BC & ~A~ & ~B~ & ~C~ & ABC \\ \hline
		O & 0 & 0 & 0 & 0 & 0 & 0 & 0 \\
		 ~  & 0 & 0 & 1 & 0 & 0 & 0 & 1 \\
		 ~  & 0 & 1 & 0 & 0 & 0 & 0 & 1 \\
		C & 0 & 1 & 1 & 0 & 0 & 1 & 1 \\
		 ~  & 1 & 0 & 0 & 0 & 0 & 0 & 1 \\
		B & 1 & 0 & 1 & 0 & 1 & 0 & 1 \\
		A & 1 & 1 & 0 & 1 & 0 & 0 & 1 \\
		 ~  & 1 & 1 & 1 & 0 & 0 & 0 & 1 \\
	\end{tabular}
	\caption{
	Tabular representation of the (unique) contraction map that proves validity of the MMI inequality \Eqref{eq:mmi} in holography.
	Occurrence bitstrings as defined in \Eqref{eq:occur} and their images, fixed by \Eqref{eq:occurcond}, are indexed by their associated party label on the left-most column, including the one for the purifier O. Columns are indexed by the bitstring entry they label, with a vertical double-line separating domain from codomain of the contraction map. For the domain, each $I_l$ labels entry $x_l$ for $l\in[L]$ of $x\in \{0,1\}^L$ and, for the codomain, each ${J_r}$ labels entry $y_r$ for $r\in[R]$ of $y\in \{0,1\}^R$. Every row represents one entry of the map $f : x\mapsto y$ by listing input bits in $x$ followed by output bits in $y$.}
	\label{tab:mmi}
\end{table}

\begin{figure}
    \centering
    \includegraphics[width=.95\textwidth]{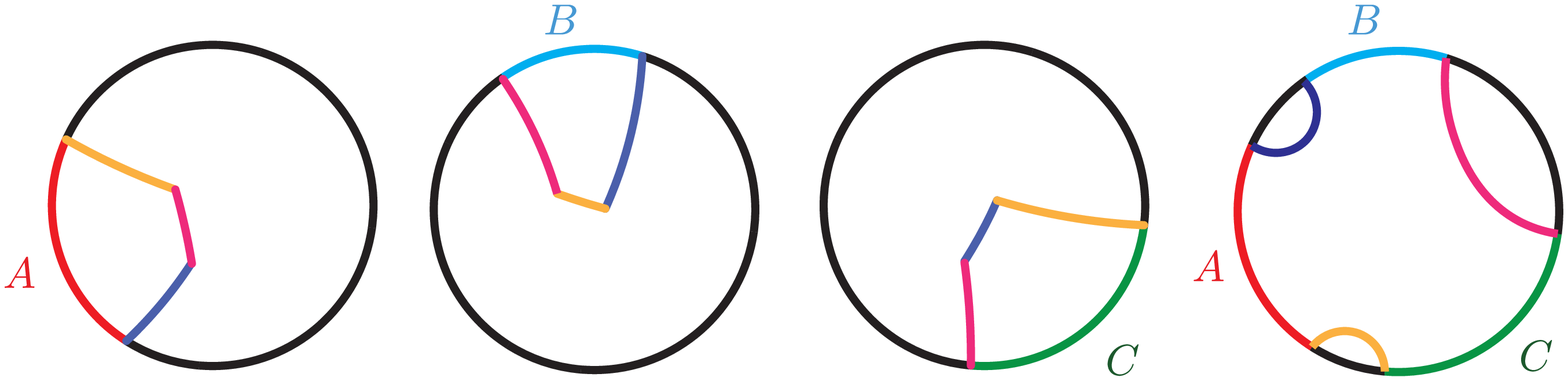}
    \caption{Homology regions specified by the unique contraction map for MMI in Tab.~\ref{tab:mmi}, exemplified by the configuration in Fig.~\ref{fig:mmilhs}. From left to right, subfigures show the RHS homology region and bounding surfaces that the cut-and-paste procedure yields for subsystems $A$, $B$, $C$ and $ABC$.}
    \label{fig:mmirhs}
\end{figure}

\subsection{Bulk HEC implies Boundary HEC} \label{sec:proofHEC}

The previous subsection has taught us how to prove a holographic entropy inequality when the entropies are given by the RT formula, \Eqref{eq:RT}. We would now like to investigate the case in which the entropies are instead given by the QES formula in \Eqref{eq:QES}. Since the generalized entropy, \Eqref{eq:Sgen}, receives contributions from bulk entropies, we will have to understand how this affects the proof-by-contraction method. Besides using the QES rather than RT surfaces, the partitioning of the bulk into regions via inclusion/exclusion carries through identically. 

The first departure we observe involves \Eqref{eq:sibulkxx}, where we now have to include bulk entropy contributions as well. Making use of \Eqref{eq:lhshomo}, the QES entropy can be written in terms of bitstrings as
\begin{equation}
    \SQES(I_l) = \frac{1}{4G}\sum_{x,x'} \abs{x_l - x_l'} \mathcal{A}\left[\gamma(x,x')\right] + \Sbulk\left({\bigcup_{x:x_l=1} \sigma(x)}\right).
\end{equation}
Let $f$ now be a map specifying homology regions for RHS subsystems as originally introduced above \Eqref{eq:rhshomo}. Then, similar to \Eqref{eq:sjbulkxx}, the minimality condition in the definition of the QES entropy guarantees that
\begin{equation}
    \SQES(J_r) \le \frac{1}{4G}\sum_{x,x'} \abs{f(x)_r - f(x)_r'} \mathcal{A}\left[\gamma(x,x')\right] + \Sbulk\left({\bigcup_{x:f(x)_r=1} \sigma(x)}\right).
\end{equation}
Following steps analogous to those below \Eqref{eq:sjbulkxx}, we find that the desired inequality in the form of \Eqref{eq:ineq} that we wish to prove becomes (cf. \Eqref{eq:comp})
\begin{equation}
\label{eq:qescomp}
\begin{aligned}
    \sum_{l=1}^L \alpha_l \Sbulk\left({\bigcup_{x:x_l=1} \sigma(x)}\right) &- \sum_{r=1}^R \beta_r \Sbulk\left({\bigcup_{x:f(x)_r=1} \sigma(x)}\right) \\
    &\ge - \sum_{x,x'} \left( d_\alpha(x,x') - d_\beta(f(x),f(x')) \right) \mathcal{A}\left[\gamma(x,x')\right].
\end{aligned}
\end{equation}

Now, suppose that the inequality in \Eqref{eq:ineq} is a valid HEC inequality for RT and thus, $f$ can be chosen to be a contraction map consistent with Thm.~\ref{thm:pbc}. Upon choosing such an $f$, the right-hand side of \Eqref{eq:qescomp} is guaranteed to be non-positive and the inequality can be collapsed down to a purely bulk entropy inequality of the form
\begin{equation}
\label{eq:bulkineq}
    \sum_{l=1}^L \alpha_l \Sbulk\left({\bigcup_{x:x_l=1} \sigma(x)}\right) \ge \sum_{r=1}^R \beta_r \Sbulk\left({\bigcup_{x:f(x)_r=1} \sigma(x)}\right).
\end{equation}
If this bulk inequality holds, then we would be guaranteed that its boundary counterpart, \Eqref{eq:ineq} with entropies computed by the QES, would hold as well. But \Eqref{eq:bulkineq} looks like a very complicated entropy inequality: since the unions run over $x\in\{0,1\}^L$, the number of distinct regions $\sigma(x)$ now playing the role of parties will generally be as large as $2^L-1$\footnote{It is not $2^L$ because the purifier $O$ always has an all-zero occurrence vector on both sides, so its associated $\sigma(x)$ subregion does not appear anywhere.}. For instance, when \Eqref{eq:ineq} is the MMI inequality in \Eqref{eq:mmi}, the bulk inequality in \Eqref{eq:bulkineq} that the contraction map in Tab.~\ref{tab:mmi} yields is precisely the $7$-party inequality in \Eqref{eq:7mmi}. Hence the relevant question is whether there is any natural condition on the bulk entropy from which inequalities of the form of \Eqref{eq:bulkineq} would follow. The answer is yes, and the condition is that $\Sbulk$ itself obey HEC inequalities as well!

To show this, we will once again employ the proof-by-contraction method. In particular, notice that if $\Sbulk$ obeys HEC inequalities, then it must obey any inequality which can be proved this way. Hence, what we will show is that \Eqref{eq:bulkineq} itself admits a contraction map. Indeed, the contraction map that proves \Eqref{eq:bulkineq} is the very same $f$ that defines it and which recall was assumed to prove the original boundary inequality \Eqref{eq:ineq} for the RT formula. By showing that $f$ is a contraction map for \Eqref{eq:bulkineq}, we will have shown that the boundary entropy $\SQES$ obeys HEC inequalities if the bulk entropy $\Sbulk$ does so too.

The proof is straightforward and just requires thinking about the occurrence bitstrings of \Eqref{eq:bulkineq}. The bulk regions $\sigma(x)$ for every $x\in\{0,1\}^L$ that is not the all-zero bitstring are the $2^L-1$ parties that make up this inequality. Now, the occurrence bitstring of a given party $\sigma(x)$ is nothing but $x$ itself -- this is simply because $\sigma(x)$ shows up for every $l$ for which $x_l=1$, and does not show up otherwise. So every single domain bitstring $x\in\{0,1\}^L$ is itself an occurrence bitstring, including the all-zero bitstring for the purifying region. Suppose we now want to build homology regions for the RHS subsystems via some map $\tilde{f}$. The homology constraint prescribed by \Eqref{eq:occurcond} fixes the image of all occurrence bitstrings to be precisely $\tilde{f}(x)=f(x)$, as follows from \Eqref{eq:bulkineq}. However, this completely fixes the LHS-to-RHS map to be $\tilde{f}=f$, leaving no residual freedom. Since $f$ was born as a contraction map for \Eqref{eq:ineq}, $f$ itself provides the proof-by-contraction that proves \Eqref{eq:bulkineq}.

In practice, looking at MMI as an example, all we really had to do was to take Tab.~\ref{tab:mmi} and assign a label to every unlabelled row. Understanding these as parties, and the LHS and RHS bitstrings as their occurrence bitstrings and respective images then tells us which parties to append to each column label. If we were to label rows top-to-bottom by $O$-$F$-$E$-$C$-$D$-$B$-$A$-$G$, the resulting column labels are precisely the ones which would correspond to \Eqref{eq:7mmi}. With this relabelling, Tab.~\ref{tab:mmi} itself proves \Eqref{eq:7mmi} to be a valid HEC inequality.

Altogether, we have proven the desired result:
\begin{nthm}
\label{eq:hectohec}
    If $\Sbulk$ obeys all HEC inequalities, then $\SQES$ obeys all HEC inequalities too.
\end{nthm}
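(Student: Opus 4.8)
The plan is to run the proof-by-contraction method of Thm.~\ref{thm:pbc} twice: first to reduce a given boundary inequality for $\SQES$ to a single bulk entropy inequality for $\Sbulk$, and then to recognize that bulk inequality as one that proof-by-contraction certifies. So, fix an arbitrary HEC inequality written in the canonical form \Eqref{eq:ineq}; since it is a valid HEC inequality it comes equipped with a contraction map $f:\{0,1\}^L\to\{0,1\}^R$ obeying the occurrence conditions \Eqref{eq:occurcond} and the contraction property \Eqref{eq:cont1}. First I would rerun the bitstring discretization of Sec.~\ref{sec:contreview} verbatim but with QES surfaces in place of RT surfaces, so that each $\SQES(I_l)$ splits into the area piece $\frac{1}{4G}\sum_{x,x'}\abs{x_l-x_l'}\mathcal{A}[\gamma(x,x')]$ plus the bulk piece $\Sbulk(\bigcup_{x:x_l=1}\sigma(x))$, while using $f$ to assemble RHS homology regions bounds each $\SQES(J_r)$ above by its area piece plus $\Sbulk(\bigcup_{x:f(x)_r=1}\sigma(x))$. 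Collecting terms produces \Eqref{eq:qescomp}. Because $f$ is a contraction map, $d_\alpha(x,x')\ge d_\beta(f(x),f(x'))$ for every pair, so the area-dependent side of \Eqref{eq:qescomp} is non-positive and can be dropped; what remains is the purely bulk inequality \Eqref{eq:bulkineq}, whose ``parties'' are the (at most) $2^L-1$ nonempty bulk subregions $\sigma(x)$ with $x\neq 0^L$, the region $\sigma(0^L)$ playing the role of purifier.

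The heart of the argument is then to show \Eqref{eq:bulkineq} is itself a HEC inequality, hence satisfied by $\Sbulk$ by hypothesis, and I would do this by producing a contraction proof for it — the claim being that $f$ already is that proof. The crucial observation is that the occurrence bitstring of the party $\sigma(x)$ in \Eqref{eq:bulkineq} is $x$ itself, since $\sigma(x)$ enters the $l$-th LHS term precisely when $x_l=1$. Consequently the occurrence bitstrings of \Eqref{eq:bulkineq} exhaust the whole domain $\{0,1\}^L$, the homology constraint \Eqref{eq:occurcond} for \Eqref{eq:bulkineq} pins the image of every domain point to $f(x)$, and so $f$ is the unique candidate map. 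It satisfies \Eqref{eq:cont1} because it did so for the original inequality (the coefficients $\alpha,\beta$ are unchanged), and it obeys the occurrence conditions for \Eqref{eq:bulkineq} automatically, since by construction a party $\sigma(x)$ lies in the $r$-th RHS subsystem of \Eqref{eq:bulkineq} exactly when $f(x)_r=1$. Thm.~\ref{thm:pbc} then certifies \Eqref{eq:bulkineq}; feeding this back into \Eqref{eq:qescomp} proves the original inequality for $\SQES$. Since the inequality was arbitrary — in particular, since the HEC facets are all of this form — $\SQES$ obeys all HEC inequalities.

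I expect the only real subtlety to be the bookkeeping in the last step: one must check carefully that the same $f$ simultaneously respects the adjacency/occurrence constraints of the boundary inequality \emph{and} those of the bulk inequality \Eqref{eq:bulkineq}, and that the purifying bitstring $0^L$ is consistently excluded on both sides (and that the emptiness of some $\sigma(x)$ only weakens, never breaks, the bulk inequality). Everything else is a faithful transcription of the RT argument. One caveat worth flagging is that the reduction presupposes the boundary inequality admits a contraction-map proof; this is the standard situation for the known HEC facets, and in any event the argument shows unconditionally that $\SQES$ obeys every inequality provable by contraction — which is all that is needed for the applications, e.g. for the $7$-party inequality \Eqref{eq:7mmi}.
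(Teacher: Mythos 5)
Your proposal is correct and follows essentially the same route as the paper's own proof: reduce the boundary inequality to the purely bulk inequality \Eqref{eq:bulkineq} by choosing $f$ to be the contraction map of the original RT inequality, then observe that $f$ itself certifies \Eqref{eq:bulkineq} as a HEC inequality because the occurrence bitstring of the party $\sigma(x)$ is $x$ itself, which pins the candidate map to $f$ and inherits the contraction property. The caveat you flag, that the argument presupposes the boundary inequality admits a contraction-map proof, is exactly the implicit assumption the paper also makes when invoking Thm.~\ref{thm:pbc}.
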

This is a very general result which extends the relevance of the HEC from RT to the QES prescription.\footnote{A subset of this result, i.e. bulk HEC implies boundary MMI, has been explored in Ref.~\cite{Agon:2021tia} using the formalism of bit threads.} However, it should be clear from our discussion that Thm.~\ref{eq:hectohec} gives a sufficient but highly unnecessary condition on $\Sbulk$ for $\SQES$ to obey any particular HEC inequality. In a more controlled analysis, we could ask: what natural condition on $\Sbulk$ guarantees that $\SQES$ obey a certain HEC inequality?
This is precisely what we turn to next for the case of MMI.

\section{Bulk MMI implies Boundary MMI}
\label{sec:MMI}

The results in this section are simple enough to state and prove and thus, we simply present them without elaborating on the techniques used to obtain them.
We encourage the interested reader to look at App. \ref{asec:MMI} for more details.

For subsets $\varnothing\ne I,J,K\subseteq[n+1]$, the following bits of notation will be useful:
\begin{subequations}
\begin{align}
    \label{eq:saq}
    I_2(I\!:\!J) &\equiv S(I)+S(J) - S(I\!\cup\!J),\\
    \label{eq:ssaq}
    Q(I ; J\!:\!K) &\equiv S(I\!\cup\!J) + S(I\!\cup\!K) - S(I) - S(I\!\cup\!J\!\cup\!K),\\
    \label{eq:mmiq}
    I_3(I\!:\!J\!:\!K)&\equiv S(I\!\cup\!J) + S(I\!\cup\!K) + S(J\!\cup\!K) - S(I)-S(J)-S(K)-S(I\!\cup\!J\!\cup\!K).
\end{align}
\end{subequations}
These may be recognized as defining arbitrary instances of the following inequalities:\footnote{Restrictions on the choices of subsystems $\varnothing\ne I,J,K\subseteq[n+1]$ are needed to prevent these from trivializing -- see Eqs. \eqref{eq:SAins}, \eqref{eq:SSAins} and \eqref{eq:MMIins} for more details on this.}
\begin{subequations}\label{eq:allineqs}
\begin{alignat}{4}
    \label{eq:sadef}
    \text{SA:}  && \qquad I_2(I\!:\!J)      &&\;\ge\;&&0,\\
    \label{eq:ssadef}
    \text{SSA:} && \qquad Q(I ; J\!:\!K)        &&\;\ge\;&&0,\\
    \label{eq:mmidef}
    \text{MMI:} && \quad -I_3(I\!:\!J\!:\!K)&&\;\ge\;&&0.
\end{alignat}
\end{subequations}
The first two, subadditivity (SA) and strong subadditivity (SSA), are universal quantum entropy inequalities (i.e., valid for all quantum states), whereas the third one, monogamy of mutual information (MMI), is a holographic one (i.e., valid only when the RT formula applies). In what follows it is implicitly assumed that $\Sbulk$ obeys SA and SSA.

The first result is a negative but expected one (see e.g. Ref.~\cite{Akers:2019lzs}): that $\Sbulk$ obey the universal inequalities of SA and SSA is not enough for $\SQES$ to obey MMI. In other words,

\begin{nprop}
\label{prop:ghz}
    Boundary MMI does not hold in general.
\end{nprop}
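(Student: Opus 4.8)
The plan is to prove the proposition by producing an explicit counterexample: a time-symmetric holographic configuration to which the QES prescription \Eqref{eq:QES} applies, in which $\Sbulk$ obeys SA and SSA (automatically, since these hold for every density matrix) and yet the induced boundary entropies $\SQES$ violate MMI, \Eqref{eq:mmi}. The mechanism is the one anticipated in the introduction, namely that by choosing the geometry so that all area terms in \Eqref{eq:Sgen} are parametrically small compared to the bulk entanglement entropies, the QES cone contains the full quantum entropy cone; it therefore suffices to realize as $\Sbulk$ any single quantum state that violates MMI. The cleanest such state is a GHZ state of four qudits of dimension $d$, namely $\frac{1}{\sqrt{d}}\sum_{k=0}^{d-1}\ket{kkkk}$, three of whose legs will be carried by the bulk matter in regions adjacent to boundary parties $A$, $B$, $C$, with the fourth leg carried by the bulk region adjacent to the purifier $O$.

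Concretely, I would take a bulk slice with three boundary regions $A$, $B$, $C$ and purifier $O$, partitioned into four disjoint bulk subregions $a,b,c,o$ — chosen to be the homology regions of the trivial RT surfaces of $A,B,C,O$ — and arrange the geometry (connected, trivial topology, all candidate areas parametrically small) so that for each of the seven subsystems $R\in\{A,B,C,AB,AC,BC,ABC\}$ the unique QES is the trivial surface and its homology region is the union of the $a,b,c$ adjacent to the parties of $R$. Putting the four-qudit GHZ state on $(a,b,c,o)$, every proper subsystem of a GHZ state has entropy exactly $\log d$, so $\SQES(R)=\log d$ for all seven $R$. Substituting into \Eqref{eq:mmi}, its left-hand side is $\SQES(AB)+\SQES(AC)+\SQES(BC)=3\log d$ while its right-hand side is $\SQES(A)+\SQES(B)+\SQES(C)+\SQES(ABC)=4\log d$, so MMI fails by $\log d>0$, which can be taken arbitrarily large. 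Since $\Sbulk$ obeys SA and SSA trivially, this proves the proposition.

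The step I expect to need the most care is justifying that the extremization-and-minimization in \Eqref{eq:QES} really does select the trivial surface, with the stated homology region, for all seven subsystems simultaneously, rather than a connected or partially-connected competitor. The most robust way to settle this is to realize the configuration as a tensor network with background ``area'' bonds of dimension one and bulk legs carrying the GHZ state, in the spirit of Ref.~\cite{Hayden:2016cfa}: such a network manifestly computes $\SQES(R)=\Sbulk(R)$ with no area term at all, and its entropy vector is exactly the one above. Equivalently one can work in the strict limit in which all areas are scaled to zero relative to the GHZ entropy, so that the area term drops out of \Eqref{eq:Sgen} entirely and the reachable boundary entropy vectors are precisely those of the quantum entropy cone, which is well known to contain MMI-violating vectors. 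A fully explicit gravitational realization, with the bookkeeping of the competing QES phases, is carried out in App.~\ref{asec:MMI}.
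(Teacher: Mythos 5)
Your proposal is correct and is essentially the paper's own proof: a four-party GHZ bulk state shared among regions adjacent to $A$, $B$, $C$ and the purifier, giving $I_3(A\!:\!B\!:\!C)=2S_0>0$ while $\Sbulk$ trivially obeys SA and SSA. The one step you flag as delicate (scaling all areas to zero) is sidestepped more cleanly in the paper by taking $A,B,C$ small and distant so the geometric contributions factorize as $\sum_{i\in I}S_{\mathcal{A}}(i)$ and then cancel identically in $I_3$ whatever their size; also note that App.~\ref{asec:MMI} does not contain the explicit gravitational bookkeeping you attribute to it, so that deferral should be dropped.
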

\begin{proof}
    Consider three sufficiently small and distant boundary subregions $A$, $B$ and $C$ such that their geometric contributions to $\SQES$ for any subsystem $I$ built out of them factorizes and takes the form $\SQES(I)= \sum_{i\in I} S_{\mathcal{A}}(i) + \Sbulk(I)$. Evaluating \Eqref{eq:mmiq} on such a configuration, all $S_{\mathcal{A}}$ contributions cancel out and only $\Sbulk$ ones remain. Choosing the bulk state among these subregions and their complement to be of $4$-party GHZ-type, one achieves $\Sbulk(I)=S_0$ for all $I$ subsystems. The upshot is a configuration where the bulk state clearly obeys SA and SSA, but for which $\SQES$ yields $\IN{3}{A:B:C}=2S_0\ge0$, thus violating boundary MMI.
\end{proof}
We can also understand this argument in terms of a violation of the $7$-party condition in \Eqref{eq:7mmi}. Namely, the configuration used in the proof above would correspond to only having the occurrence bitstrings $000$, $110$, $101$ and $011$ labelling nonempty regions in Fig.~\ref{fig:mmilhs}. In terms of the $7$-party inequality derived from \Eqref{eq:bulkineq} using Tab.~\ref{tab:mmi}, this results in \Eqref{eq:7mmi} with $D$, $E$, $F$ and $G$ being empty regions. This reduces \Eqref{eq:7mmi} to the form of MMI for $\Sbulk$, which is clearly violated by a $4$-party GHZ state.

In searching for sufficient conditions on $\Sbulk$ for $\SQES$ to obey MMI, we are thus led to impose MMI in the bulk as well. A natural way to proceed is thus to combine all instances of SA, SSA and MMI in \Eqref{eq:allineqs} for $7$ parties and check if these imply \Eqref{eq:7mmi} as a bulk inequality. This leads us to {\bf Result 2}:
\begin{nprop}
\label{prop:mmitommi}
    Bulk MMI implies boundary MMI
\end{nprop}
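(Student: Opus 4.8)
The plan is to specialize, to the single inequality MMI, the same two-step argument that proved Thm.~\ref{eq:hectohec}. The first step is a reduction. Since MMI is a valid HEC inequality whose (unique) proof-by-contraction is the map $f$ of Tab.~\ref{tab:mmi}, the passage from \Eqref{eq:qescomp} to \Eqref{eq:bulkineq} applies verbatim: with this $f$ the area contribution in \Eqref{eq:qescomp} is non-positive, so $\SQES$ obeys boundary MMI in \emph{every} configuration as soon as $\Sbulk$ obeys the associated bulk inequality, which for MMI is precisely the $7$-party inequality \Eqref{eq:7mmi} with the $2^3-1$ nonempty bulk regions $\sigma(x)$ playing the roles of $A,\dots,G$ and $\sigma(000)$ the purifier. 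Thus the proposition reduces to a purely bulk statement: \Eqref{eq:7mmi} follows from bulk MMI, together with the standing bulk SA, SSA and positivity of the von Neumann entropy.

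The second step is the combinatorial heart of the matter. This residual claim is a finite linear-algebraic question about entropy vectors on $7$ parties: is \Eqref{eq:7mmi} a nonnegative combination of instances of MMI, SSA, SA and positivity evaluated on subsystems of $\{A,\dots,G\}$? Because those inequalities generate a polyhedral cone, existence of such a certificate is decidable by linear programming, and exhibiting it explicitly \emph{is} the proof. I would prune the search with the $S_3$ symmetry of \Eqref{eq:7mmi}, which simultaneously permutes $(A,B,C)$ and the ``opposite'' triple $(D,E,F)$ --- $D$ sitting in the left-hand term that omits $C$, and so on --- while fixing $G$ and the purifier; averaging any certificate over this group gives a symmetric one, so only a few orbit representatives are needed, the natural candidates being MMI instances applied to coarse-grained tripartitions such as $\{AG,BD,CEF\}$ and its symmetry images --- whose pairwise-union terms $S(ABDG),S(ACEG),S(BCFG)$ already assemble the left-hand side of \Eqref{eq:7mmi} --- supplemented by SSA and SA instances that cancel the leftover terms. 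Once \Eqref{eq:7mmi} is certified for $\Sbulk$, the reduction step at once delivers MMI for $\SQES$ on arbitrary boundary subregions.

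I expect the second step to be the genuine obstacle, and it is subtler than the analogous point in Thm.~\ref{eq:hectohec}, where the contraction map itself \emph{was} the certificate: \Eqref{eq:7mmi} is \emph{not} a balanced inequality --- party $G$ occurs in all three left-hand terms but only once on the right, with net weight $+2$ --- so no combination of the (individually balanced) MMI, SSA, SA instances alone can reproduce it, and the certificate is forced to include a positivity contribution, anticipated to be $2\,\Sbulk(\sigma(111))\ge 0$, soaking up the mismatch at $G$. A secondary bookkeeping subtlety is whether one treats the full bulk state as pure --- so that $\Sbulk(ABCDEFG)=\Sbulk(\sigma(000))$ --- and, if so, which purifier-involving instances to admit. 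Throughout, Prop.~\ref{prop:ghz} should be kept in view as a sharpness statement: bulk SA and SSA alone do not suffice, so a genuinely holographic bulk input --- here, bulk MMI --- is indispensable.
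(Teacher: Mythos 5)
Your first step---reducing boundary MMI to the bulk $7$-party inequality \eqref{eq:7mmi} via the contraction map of Tab.~\ref{tab:mmi}---is exactly the paper's setup and is fine. The gap is in the second step: you never exhibit the certificate showing that \eqref{eq:7mmi} follows from bulk MMI together with SA/SSA, and the roadmap you give for finding one points in a direction that provably fails. You correctly note that \eqref{eq:7mmi} is unbalanced in $G$ (net weight $+2$ on the left), but you conclude that the certificate is ``forced'' to include a positivity contribution such as $2\,\S{G}\ge0$. If such a certificate existed, the remaining (individually balanced) instances of SA, SSA and MMI would have to sum to $\S{ABDG}+\S{ACEG}+\S{BCFG}-\S{A}-\S{B}-\S{C}-\S{ABCDEFG}-2\S{G}$, and since each instance is a valid inequality this would certify the strictly stronger statement
\begin{equation*}
\S{ABDG}+\S{ACEG}+\S{BCFG}\;\ge\;\S{A}+\S{B}+\S{C}+\S{ABCDEFG}+2\S{G}.
\end{equation*}
That statement is false: take $G$ maximally entangled with $A$ (a Bell pair of entropy $S_0$) and all other parties trivial; the left-hand side is $S_0$ while the right-hand side is $3S_0$. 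So no positivity-based certificate exists, and your search, as designed, would come back empty.

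The missing idea is that the imbalance at $G$ must be absorbed by an SSA instance involving the purifier, i.e.\ weak monotonicity: written over the visible parties, $\S{AG}+\S{BCG}\ge\S{A}+\S{BC}$ has $G$ appearing twice on the left and not at all on the right, which supplies precisely the $+2$ you need. This is what the paper does: its certificate \eqref{eq:mmiproof} consists of a single MMI instance on $(AG,B,C)$ plus four SSA instances, one of which, $\Q{A;DEFO:G}$, is the weak-monotonicity inequality \eqref{eq:MMIproof_weak_monotonicity}; the rest telescope by restoring $D$, $E$, $F$ one at a time with ordinary SSA. (As a side remark, your suggested seed MMI instance on the tripartition $\{AG,BD,CEF\}$ does not reproduce the left-hand side of \eqref{eq:7mmi}: its pairwise unions are $ABDG$, $ACEFG$ and $BCDEF$, not $ABDG$, $ACEG$ and $BCFG$.) Your linear-programming framework is sound in principle---it is exactly what the paper's appendix implements---but it only yields the proposition once purifier-involving SSA instances are admitted into the generating set, a point your ``bookkeeping subtlety'' flags but does not resolve.
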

\begin{proof}
    The 7-party bulk inequality \Eqref{eq:7mmi} which implies boundary MMI is \cite{Akers:2019lzs}
    \begin{equation}
    \S{ABDG}+\S{ACEG}+\S{BCFG}\ge \S{A}+\S{B}+\S{C}+\S{ABCDEFG}.
    \end{equation}
    Consider the special case that $D = E = F = \varnothing$. What remains is
    \begin{equation}\label{eq:proof_MMI_addG}
        \S{ABG} + \S{ACG} + \S{BCG}  \ge \S{A} + \S{B} + \S{C} + \S{ABCG},
    \end{equation}
    which follows from combining the two inequalities
    \begin{align}
        \S{AG} + \S{BCG} &\ge \S{A} + \S{BC}, \label{eq:MMIproof_weak_monotonicity} \\
        \S{ABG} + \S{ACG} + \S{BC} &\ge \S{AG} + \S{B} + \S{C} + \S{ABCG},
    \end{align}
    where the first is weak-monotonicity for the regions $A, BC, G$ and the second is MMI for the regions $AG, B, C$. (Note that weak-monotonicity can be written as SSA by including the purifier. Let $DEFO$ purify $ABCG$. Then \Eqref{eq:MMIproof_weak_monotonicity} is SSA for the regions $A, DEFO, G$.)
    Now let $D$ be arbitrary. What remains is
    \begin{equation}\label{eq:proof_MMI_addD}
        \S{ABDG} + \S{ACG} + \S{BCG}  \ge \S{A} + \S{B} + \S{C} + \S{ABCDG},
    \end{equation}
    which follows from \Eqref{eq:proof_MMI_addG} combined with
    \begin{equation}
        \S{ABDG} + \S{ABCG}  \ge \S{ABG} + \S{ABCDG}.
    \end{equation}
    This holds from SSA on the regions $ABG, C, D$.
    Similarly, now let $E$ be arbitrary to obtain
    \begin{equation}\label{eq:proof_MMI_addE}
        \S{ABDG} + \S{ACEG} + \S{BCG}  \ge \S{A} + \S{B} + \S{C} + \S{ABCDEG},
    \end{equation}
    which follows from \Eqref{eq:proof_MMI_addD} combined with
    \begin{equation}
        \S{ACEG} + \S{ABCDG}  \ge \S{ACG} + \S{ABCDEG}.
    \end{equation}
    This follows from SSA on the regions $ACG, BD, E$.
    Adding back $F$ works the same way. It returns us to the 7-party inequality \Eqref{eq:7mmi}, which follows from \Eqref{eq:proof_MMI_addE} combined with
    \begin{equation}
        \S{BCFG} + \S{ABCDEG} \ge \S{BCG} + \S{ABCDEFG}.
    \end{equation}
    This follows from SSA on the regions $BCG, ADE, F$.
    This completes the proof, which can be compactly summarized in the following expression:
    \begin{equation}
    \label{eq:mmiproof}
        -\IN{3}{AG:B:C}+\Q{A;DEFO:G}+\Q{ABG;C:D}+\Q{ACG;BD:E}+\Q{BCG;ADE:F}\geq0.
    \end{equation}
\end{proof}

The finding of such an expression is nontrivial but can also be easily tackled as a linear programming problem, as explained in App. \ref{asec:MMI}. We emphasize that there is no obvious reason why Prop.~\ref{prop:mmitommi} should hold: that the same contraction map proves both \Eqref{eq:mmi} and \Eqref{eq:7mmi} (as in the proof of Thm.~\ref{eq:hectohec}) is a priori logically unrelated to the fact that the former implies the latter. Indeed, this is particularly remarkable because only MMI, which effectively is just a $3$-party inequality, suffices to prove validity of \Eqref{eq:7mmi}, which is a $7$-party inequality.

\section{Discussion} \label{sec:disc}

\subsection*{Summary}

So far, the exploration of holographic entropy inequalities had focused on using the RT formula and its geometric nature.
However, we have learned in the past few years that quantum effects incorporated using the QES formula can play a crucial role, e.g., in the presence of black holes.
Thus, the goal of this work is to open up a research program of understanding the connection between bulk and boundary entropy inequalities.

To summarize, we have derived relations between constraints imposed on bulk entropies and the corresponding constraints on boundary entropies by relating them via the QES formula.
The first result we showed in this direction is that the bulk HEC imposes the boundary HEC in a nontrivial way.
A generic $n$-party boundary HEC inequality requires a bulk inequality that could contain an exponentially large number of parties.
Nevertheless, the entire collection of bulk HEC inequalities guarantees the full set of HEC inequalities on the boundary.
Secondly, we showed that imposing MMI on arbitrary subregions in the bulk leads to arbitrary boundary subregions satisfying MMI as well. This result is a first step in carrying out a more controlled study of the interplay between specific bulk and boundary constraints.

\subsection*{Assumption}
Note that, in stating both of our main results, we assumed the QES prescription is valid. This is typically expected to be true when it is applied in semiclassical bulk states, i.e., states of quantum fields on fixed, possibly curved backgrounds. However, it has recently been pointed out that there are certain semiclassical states for which the QES prescription gives the wrong entropy at leading order in $1/G$~\cite{Akers:2020pmf}. One could worry that in such states, our results will not hold, for example that bulk MMI will no longer imply boundary MMI.
We expect that this is actually not a problem. That is, even with the leading order corrections of Ref.~\cite{Akers:2020pmf}, we expect that our results continue to be valid. The basic idea is that different parts of the bulk wavefunction will each satisfy the QES prescription, obeying our results, and the inequalities in the HEC are such that their mixture will therefore also obey our results. We leave a detailed analysis of this for future work.

\subsection*{Conjectures}

Our results here can be interpreted as the two extremities of a general set of such connections between boundary and bulk entropies.
We have analyzed the HEC-constrained QES cone, which we showed to be equal to the HEC, and also the MMI-constrained QES cone, which could be at most as large as the MMI cone.
There are various ways in which one could imagine interpolating between our two results.
We conjecture a list of possibilities here:
\begin{itemize}
    \item If a given $n$-party inequality no stronger than HEC ones is satisfied in the bulk, then the same $n$-party inequality is satisfied on the boundary.
    \item If the complete set of $n$-party HEC inequalities is satisfied in the bulk, then every $n$-party HEC inequality is satisfied on the boundary.
    \item If the complete set of $k$-party HEC inequalities is satisfied in the bulk, then there is some $n_k$ such that every $n$-party HEC inequality is satisfied on the boundary for all $k\le n \le n_k$.
\end{itemize}
In the first possibility above we needed to exclude inequalities stronger than HEC ones because those cannot possibly by satisfied by any constrained QES cones since, as we mentioned before, all such cones automatically contain the HEC due to the area term.
On the other hand, notice that we did not restrict this possibility to just HEC inequalities -- indeed, it is possible that, in full generality, any weaker bulk constraint implies its boundary counterpart. Such a situation may be hard to prove but easy to falsify.

When focusing on HEC inequalities, the three possibilities above can be easily seen to go from strong to weak, in the sense that each of them would imply the subsequent one. For instance, the second one implies the trivial case $n_k=k$ for the third one, which in principle could be weaker by having $n_k\ge k$.
It would be interesting to explore these and other logical possibilities in future work.

\subsection*{Hypergraphs}

Holographic states can be understood as states whose entanglement structure admits a suitably discretized representation in terms of graphs where the RT formula computes the entropy of a boundary subregion in terms of the minimum cut across the graph.
A simple generalization of this class of states is to hypergraphs, where nodes can be connected using hyperedges instead of regular edges.
A $k$-hyperedge is a connection that groups $k$ nodes simultaneously, where $k=2$ corresponds to usual edges.
These states also satisfy a nontrivial entropy cone if one posits that the entropy of a boundary subregion is computed using a generalized RT formula, i.e., using the minimum cut in a hypergraph~\cite{Bao:2020zgx,Bao:2020mqq}.
Further, such hypergraph states can be explicitly constructed as stabilizer states using random tensor networks with $k$-party GHZ links instead of Bell-pair-like bonds~\cite{Walter:2020zvt}.

There is also a simple generalization of the QES formula to hypergraphs, taking a form analogous to \Eqref{eq:QES}.
In the hypergraph, the homology region $\sigma_{R}$ is described by the collection of nodes defining the minimum cut for a set of boundary nodes $R$. The area term in the generalized entropy is implemented by the total weight of the minimum cut, which is given by the sum of the weights of all hyperedges that connect nodes in $\sigma_R$ to those in its complement. Finally, the bulk entropy contribution can also be realized in a random tensor network in the form of bulk dangling legs on every node, analogous to Ref.~\cite{Hayden:2016cfa}.

Proving entropy inequalities for the RT formula on hypergraph states involves a generalization of the contraction-map technique used in Sec.~\ref{sec:contraction}.
Roughly, apart from requiring that the map $f$ described in Sec.~\ref{sec:contreview} obey a contraction property for pairs of bitstrings, the proof of inequalities on hypergraphs involves additional multi-bitstring contraction conditions  (see Ref.~\cite{Bao:2020zgx} for more details). Intuitively, these extra conditions make it strictly harder for such an $f$ to exist, thereby explaining why hypergraph entropies obey weaker inequalities than graph ones and thus attain richer entanglement structures.
Nevertheless, we can again run a similar argument to that in Sec.~\ref{sec:proofHEC} to show in an analogous fashion that the bulk inequalities required to prove any specific hypergraph cone boundary inequality follow from the same contraction map upon relabelling parties.
Thus, one can again see that the bulk hypergraph cone implies the boundary hypergraph cone.

It would also be interesting to probe the relation between specific hypergraph inequalities in the bulk and boundary, in the spirit of Sec.~\ref{sec:MMI}. For instance, in Ref.~\cite{Bao:2020zgx} it was shown that minimum cuts on hypergraphs obey the Ingleton inequality~\cite{ingleton1971representation}
\begin{align}
    \S{AB}\!+\!\S{AC}\!+\!\S{AD}\!+\!\S{BC}\!+\!\S{BD} \ge \S{A}\!+\!\S{B}\!+\!\S{CD}\!+\!\S{ABC}\!+\!\S{ABD}.
\end{align}
However, since this inequality involves $L=5$ terms, its hypergraph contraction map $f$ would lead to a $2^L-1=31$-party bulk inequality, so it seems rather nontrivial to prove a result analogous to the one proved for MMI in Sec.~\ref{sec:MMI}.


\acknowledgments

We thank Veronika Hubeny and Max Rota for detailed comments on an earlier draft.
We would also like to thank Ning Bao, Netta Engelhardt and Mykhaylo Usatyuk for useful discussions.
CA is supported by the US Department of Energy grants DE-SC0018944 and DE-SC0019127, and also the Simons Foundation as a member of the It from Qubit collaboration.
SHC is supported by NSF grant PHY-2107939, and by a Len DeBenedictis Graduate Fellowship.
At the early stages of this work, SHC was also supported by NSF grant PHY-1801805, and by funds from UCSB.
PR is supported in part by a grant from the Simons Foundation, and by funds from UCSB.
This material is based upon work supported by the Air Force Office of Scientific Research under award number FA9550-19-1-0360.

\appendix

\section{Behind the scenes of Sec. \ref{sec:MMI}}
\label{asec:MMI}

In this section we initiate a more refined study of the relationship between bulk entropy constraints and boundary inequalities for the QES entropy. We begin this program by trying to obtain a weak condition guaranteeing that boundary MMI holds. We know that a sufficient condition for boundary MMI to hold is for the bulk to obey the $7$-party inequality in \Eqref{eq:7mmi}, which we explained can be derived from \Eqref{eq:bulkineq}. We would like to find simpler entropy conditions for smaller party number implying \Eqref{eq:7mmi}.

Our formalization of this search will be as follows.
We are after some basic set of linear inequalities implying another inequality.
A general system of positive linear inequalities can be specified by
\begin{equation}
\label{eq:Ax0}
    Ax\ge0 \qquad\text{where}\qquad A \in \mathbb{R}^{m\times n}, \quad x\in\mathbb{R}_{\ge0}^n,\quad 0\in\mathbb{R}^n.
\end{equation}
The subspace of vectors $x\in\mathbb{R}^n$ allowed by this system of inequalities is known as the feasible region and reads
\begin{equation}
    U = \{x\in\mathbb{R}_{\ge0}^n \;:\; Ax\ge0 \}.
\end{equation}
We say that another inequality $b^Tx\ge0$ specified by some $b\in\mathbb{R}^n$ is redundant with respect to or implied by \Eqref{eq:Ax0} if, when added to this system, the feasible region remains unchanged.
The inequality $b^Tx\ge0$ can be easily seen to be redundant if and only if (see e.g.~\cite{telgen1983identifying})
\begin{equation}
\label{eq:redtest}
    \min\;\{b^Tx \;:\; x\in U\} \ge0,
\end{equation}
a simple diagnostic which can be formulated as a linear programming problem. Notice that because all the inequalities we are considering are homogeneous, redundancy will always yield $0$ in this minimization, corresponding to the all zero-vector $x=0$. In contrast, when $b^Tx\ge0$ is non-redundant the minimization above will be unbounded from below. When the redundancy test in \Eqref{eq:redtest} is obeyed, we will also be interested in obtaining an explicit realization of $b^Tx\ge0$ in terms of a conical combination of the row inequalities in $Ax\ge0$. This problem can be formalized by the statement that $b^Tx\ge0$ is redundant if and only if there exists a conical combination of the rows of $A$ giving $b$, i.e., if there exists a vector $q\in\mathbb{R}_{\ge0}^m$ such that
\begin{equation}
\label{eq:redundant}
    A^T q=b.
\end{equation}
Such a conical combination can again be obtained by solving a linear program as follows:
\begin{equation}
\label{eq:lpsol}
    \text{Find }~ q\in\mathbb{R}^m ~\text{ minimizing }~ c^Tq ~\text{ s.t. }~ A^T q=b ~\text{ and }~ q\ge0,
\end{equation}
where, in order for $q$ to correspond to a conical combination as simple as possible, we will choose the objective function $c^Tq$ to involve an all-$1$ vector $c$. This way, by making $\sum_{i=1}^m q_i$ as small as possible, the output $q$ of \Eqref{eq:lpsol} will correspond to a minimal set of rows of $Ax\ge0$ yielding $b^Tx\ge0$.

Having laid out the basic formalism, we can now try to obtain a natural set of bulk inequalities $Ax\ge0$ implying \Eqref{eq:7mmi} as a redundant one. For any quantum theory, $\Sbulk$ obeys the universal inequalities of subadditivity and strong subadditivity, so these should certainly be part of our system of bulk inequalities. As explained in the introduction, if we are working with a set of parties $[n]$, entropy vectors belong to $\mathbb{R}^{2^n-1}$. A standard way of writing subadditivity (SA) is
\begin{equation}
    \S{A}+\S{B}\ge\S{AB}.
\end{equation}
We would like to apply this inequality to any two non-spanning, disjoint subsets $I,J\subseteq[n+1]$ replacing $A$ and $B$. Using the mutual information, \Eqref{eq:saq}, this results in the following set of all possible instances of SA in $\mathbb{R}^{2^n-1}$:\footnote{Recall that we are interested in mixed states on $[n]$ and that party $n+1$ is a purification. Since we have a pure state on $[n+1]$, entropies enjoy the symmetry $S(I)=S(I^\complement$), where $I^\complement \equiv [n+1] \smallsetminus I$. It is standard to fix this redundancy by excluding the label $n+1$ from entropy expressions -- i.e. if $I\ni n+1$ we write $S(I^\complement)$ instead of $S(I)$. When this happens in our pair of subsystems $I$ and $J$ for SA, we have to replace $I \to I^\complement$ and also $I\cup J \to (I\cup J)^\complement$. The resulting inequality no longer takes the standard form of SA. Indeed, what one obtains is known as the Araki-Lieb inequality, which one can see written as $\S{A}+\S{AB}\ge\S{B}$. The statement is thus that these two types of inequalities are in fact related by the purification symmetry of quantum entropies. Our treatment and notation here unifies such inequalities and we thus refer to them simply as instances of SA.}
\begin{equation}
\label{eq:SAins}
    I_2(I:J) \ge 0, \qquad \forall I,J\subseteq[n+1] ~~\text{ s.t. }~~ 
    \begin{cases}
        I\cap J=\varnothing, \\
        I\cup J\ne [n+1].
    \end{cases}
\end{equation}
The case of $I\cup J = [n+1]$ is excluded because $S([n+1])=0$ by purity, which trivializes the inequality since entropies are non-negative. We can also apply the standard form of SSA in \Eqref{eq:SSA} to any three non-spanning, disjoint subsets $I,J,K\subseteq[n+1]$. Doing so and making use of \Eqref{eq:ssaq}, we obtain all possible instances of SSA in $\mathbb{R}^{2^n-1}$:\footnote{Something analogous to what was explained in the previous footnote for the case of instances of SA happens for SSA as well. If e.g. $J\ni n+1$, one has to replace $I\cup J$ and $I\cup J\cup K$ with their complements, arriving at an inequality known as weak monotonicity which takes the standard form $\S{AB}+\S{BC}\ge\S{A}+\S{C}$. It turns out that if $I\ni n+1$, one in fact gets back an inequality of the standard SSA form.}
\begin{equation}
\label{eq:SSAins}
    Q(I;J:K) \ge 0, \qquad \forall I,J,K\subseteq[n+1] ~~\text{ s.t. }~~
    \begin{cases}
        I\cap J = I \cap K = J \cap K = \varnothing, \\
        I\cup J \cup K \ne [n+1].
    \end{cases}
\end{equation}

Let's begin by including all instances of SA and SSA from Eqs. \eqref{eq:SAins} and \eqref{eq:SSAins} in our set of inequalities obeyed by the bulk entropy. Working with $n=7$ and applying the diagnostic of redundancy in \Eqref{eq:redtest}, to the $7$-party inequality in  \Eqref{eq:7mmi}, we find no redundancy,\footnote{We are claiming here that \Eqref{eq:7mmi} is non-redundant with respect to the system of inequalities consisting of all instances of SA and SSA. However, one does have plenty of redundancy within the system of inequalities itself~\cite{pippenger2003inequalities}. This does not affect the argument, since the feasible region remains the same whether or not one includes redundant instances of SA and SSA.} thus obtaining our first result:
\begin{nfact}
    Bulk SA and SSA do not imply boundary MMI.
\end{nfact}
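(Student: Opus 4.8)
The plan is to establish the Fact by certifying that the $7$-party inequality \eqref{eq:7mmi} --- which, via the contraction map of Tab.~\ref{tab:mmi} and \eqref{eq:bulkineq}, is the bulk condition sufficient for boundary MMI --- is \emph{not} redundant with respect to the system of all instances of SA and SSA from \eqref{eq:SAins} and \eqref{eq:SSAins} on $\mathbb{R}^{2^7-1}$. By the redundancy test \eqref{eq:redtest}, it suffices to produce a single entropy vector $x$ in the feasible region $U$ of the SA/SSA system with $b^Tx<0$, where $b^Tx$ denotes the left-minus-right combination of \eqref{eq:7mmi}; since every inequality in play is homogeneous, rescaling such an $x$ then forces $\min\{b^Tx:x\in U\}=-\infty$, so \eqref{eq:7mmi} can never be deduced from SA and SSA alone.

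For the violating vector I would recycle the configuration already used in the proof of Prop.~\ref{prop:ghz}: take a quantum state on the seven parties $A,\dots,G$ in which $D,E,F,G$ are trivial one-dimensional factors while $A$, $B$, $C$ together with the purifier $O$ form a four-party GHZ state. Every nonempty proper marginal of a four-party GHZ state has the same entropy $S_0>0$, and adjoining trivial factors changes nothing, so $S(A)=S(B)=S(C)=S(AB)=S(AC)=S(BC)=S(ABC)=S_0$ and likewise $S(ABDG)=S(ACEG)=S(BCFG)=S(ABCDEFG)=S_0$. Substituting into \eqref{eq:7mmi} pits $3S_0$ on the left against $4S_0$ on the right, a strict violation --- this is precisely the reduction of \eqref{eq:7mmi} to $3$-party MMI already noted in the discussion following Prop.~\ref{prop:ghz}.

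Because this is an honest density operator on seven parties, its entropy vector automatically obeys \emph{all} instances of SA and SSA in \eqref{eq:SAins}--\eqref{eq:SSAins}, hence lies in $U$; the instances involving the trivial factors $D,E,F,G$ hold trivially, and none is secretly violated. This supplies the required certificate and, by homogeneity, the unboundedness of the redundancy linear program. There is essentially no serious obstacle: the only point needing care is the bookkeeping that the augmented seven-party vector genuinely sits in $U$, which is immediate from quantum mechanics. Equivalently, one may reach the same conclusion purely computationally by feeding the full SA/SSA constraint matrix on the $2^7-1$ entropy components into the linear program \eqref{eq:lpsol} and observing that it is unbounded below, with the GHZ state furnishing the explicit ray that witnesses this.
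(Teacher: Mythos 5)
Your proof is correct and follows essentially the same route as the paper: the explicit witness you construct (a $4$-party GHZ state on $A$, $B$, $C$, $O$ with $D,E,F,G$ trivial, giving $3S_0 < 4S_0$ in \eqref{eq:7mmi}) is precisely the configuration of Prop.~\ref{prop:ghz} and the remark following it, which the paper itself cites as the reason this Fact was expected. The only cosmetic difference is that the paper certifies non-redundancy by running the linear program \eqref{eq:redtest} directly, whereas you exhibit the violating ray analytically; both establish the same unboundedness.
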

This was of course expected from the simple argument in Prop.~\ref{prop:ghz}, but a good consistency check nonetheless. Let us consider adding more stringent inequalities to our system of bulk inequalities. A natural one from the context of holography is MMI itself, which we would now like to impose on the bulk entropy. We can apply the standard form of MMI from \Eqref{eq:mmi} to any triple of non-spanning, disjoint subsets $I,J,K\subseteq [n]$. Using the tripartite information, \Eqref{eq:mmiq}, we obtain this way all possible instances of MMI in $\mathbb{R}^{2^n-1}$:\footnote{That we are now taking only subsets of $[n]$ rather than of $[n+1]$ is not a typo. In fact, this is because including e.g. $I\ni n+1$ would give back an instance of MMI where $I\to (I\cup J \cup K)^\complement \not\ni n+1$, with $J$ and $K$ unchanged. This follows from a nontrivial symmetry property that all genuinely holographic entropy inequalities happen to share known as superbalanced (see~\cite{Hubeny:2018ijt,He:2020xuo,Avis:2021xnz} for more details).}
\begin{equation}
\label{eq:MMIins}
    - I_3(I:J:K) \ge 0, \qquad \forall I,J,K\subseteq[n+1] ~~\text{ s.t. }~~
    \begin{cases}
        I\cap J = I \cap K = J \cap K = \varnothing, \\
        I\cup J \cup K \ne [n+1].
    \end{cases}
\end{equation}
We can now go ahead and add all instances of MMI to our system of entropy inequalities involving SA and SSA. However, observe that the addition of MMI turns all instances of SSA into redundant inequalities, since any of the latter becomes a conical combination of SA and MMI via the identity
\begin{equation}
\label{eq:ssared}
    Q(I;J:K) = - I_3(I:J:K) + I_2(J:K).
\end{equation}
As a first step, we may thus brute force all possible instances of MMI into our set of inequalities and remove SSA from it completely. It also turns out that all instances of SA except for those involving singletons when written as in \Eqref{eq:SAins} are redundant in this system. Equivalently, only the MMI inequalities together with the singleton instances of SA are associated to facets of the polyhedral cone defined by all instances of SA and MMI.
Hence it is natural to only employ these in the redundancy test of \Eqref{eq:redtest}, with which we arrive at:
\begin{nfact}
    Bulk SA and MMI imply boundary MMI.
\end{nfact}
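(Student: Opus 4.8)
The plan is to reduce this statement to a finite linear-algebra fact about $7$-party entropy vectors and to certify that fact by an explicit conical combination of SA and MMI instances. First I would recall, from the machinery of Sec.~\ref{sec:proofHEC}, why it suffices for $\Sbulk$ to obey the $7$-party inequality \Eqref{eq:7mmi}: running the argument of Thm.~\ref{eq:hectohec} with MMI as the target boundary inequality and the unique contraction map of Tab.~\ref{tab:mmi}, the collapsed bulk inequality \Eqref{eq:bulkineq} is precisely \Eqref{eq:7mmi}, with the eight bitstring regions of Fig.~\ref{fig:mmilhs} (equivalently, the rows of Tab.~\ref{tab:mmi}) relabelled as $O,F,E,C,D,B,A,G$ (cf.~Ref.~\cite{Akers:2019lzs}). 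Hence boundary MMI for $\SQES$ follows once one shows that \Eqref{eq:7mmi} is redundant with respect to the system of all instances of SA and MMI on $7$ parties, i.e.\ expressible as a nonnegative combination of those instances.

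Next I would build such a combination by a telescoping argument that reinstates the auxiliary parties $D,E,F$ one at a time. Specializing \Eqref{eq:7mmi} to $D=E=F=\varnothing$ leaves $\S{ABG}+\S{ACG}+\S{BCG}\ge\S{A}+\S{B}+\S{C}+\S{ABCG}$, which is the sum of weak monotonicity for $(A,BC,G)$ --- an instance of SSA, hence of SA plus MMI via the identity $Q(I;J:K)=-I_3(I:J:K)+I_2(J:K)$ of \Eqref{eq:ssared} --- and MMI for $(AG,B,C)$. Reinstating $D$ then amounts to adding SSA for $(ABG,C,D)$, which promotes $\S{ABG}\to\S{ABDG}$ and $\S{ABCG}\to\S{ABCDG}$; reinstating $E$ adds SSA for $(ACG,BD,E)$; reinstating $F$ adds SSA for $(BCG,ADE,F)$. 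Collecting the pieces yields
\begin{equation*}
    -\IN{3}{AG:B:C}+\Q{A;DEFO:G}+\Q{ABG;C:D}+\Q{ACG;BD:E}+\Q{BCG;ADE:F}\ge0,
\end{equation*}
each summand being a nonnegative bulk quantity once $\Sbulk$ obeys SA and MMI; the remaining task is the routine bookkeeping of checking that the left-hand side telescopes term by term to \Eqref{eq:7mmi}.

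Finally, rather than guessing this decomposition, I would present the way it is actually obtained: as the linear program of \Eqref{eq:lpsol} (with feasibility governed by the redundancy test of \Eqref{eq:redtest}), taking the rows of $A$ to be the instances of SA and MMI for $n=7$, $b$ the coefficient vector of \Eqref{eq:7mmi}, and an all-ones objective minimized over nonnegative $q$ with $A^Tq=b$; any feasible $q$ both certifies redundancy of \Eqref{eq:7mmi} and supplies the explicit combination above. I expect the main obstacle to be organizational rather than conceptual: the $7$-party SA$+$MMI system is large, so one should first observe that once MMI is imposed every instance of SSA becomes redundant via \Eqref{eq:ssared}, and likewise every non-singleton instance of SA, so that the search --- and hence the resulting short certificate --- can be restricted to the facet inequalities, namely the singleton SA's together with the MMI's.
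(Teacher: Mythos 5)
Your proof is correct, and it certifies the fact by a different (and shorter) conical combination than the one the paper uses for this particular claim. The paper's route is to restrict to the non-redundant (facet) system --- singleton instances of SA together with all instances of MMI --- run the redundancy test \Eqref{eq:redtest} on \Eqref{eq:7mmi}, and exhibit the resulting $21$-term certificate. You instead recycle the $5$-term certificate \Eqref{eq:mmiproof} from the proof of Prop.~\ref{prop:mmitommi} (one MMI instance plus four SSA instances) and convert each $\Q{\cdot}$ term into an MMI instance plus an SA instance via \Eqref{eq:ssared}; this is valid and gives a compact, hand-verifiable telescoping proof, at the cost of using non-singleton SA instances such as $\IN{2}{DEFO:G}$, i.e.\ of working in the full redundant SA$+$MMI system rather than the facet one. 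That trade-off is exactly what each approach buys: yours makes the logical chain (SA $+$ MMI $\Rightarrow$ SSA $\Rightarrow$ \Eqref{eq:7mmi}) explicit and checkable by hand, while the paper's facet-restricted linear program is what underwrites its further claims about minimal certificate length and about which instances are facets of the SA$+$MMI cone. One small correction to your closing remark: restricting the search to the facet inequalities does \emph{not} yield a short certificate --- the paper finds that no fewer than $21$ facet terms are needed --- and the short certificate you actually write down only exists because you have reinstated redundant (non-facet) instances through \Eqref{eq:ssared}. Finally, when expanding the $\Q{\cdot}$ terms you should note that arguments containing the purifier (e.g.\ in $-\IN{3}{A:DEFO:G}$) reduce to standard instances of \Eqref{eq:SAins} and \Eqref{eq:MMIins} by purification symmetry, as in the paper's footnotes; this is routine but worth a sentence.
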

Given that we found redundancy, we may look more closely at the linear program in \Eqref{eq:lpsol} in order to obtain a solution to it. The simplest expressions one can obtain for the $7$-party inequality in \Eqref{eq:7mmi} in terms of a non-redundant system of instances of SA and MMI happens to be quite complicated in fact. Involving no fewer than $21$ terms, an example of these is
\begin{equation}
\begin{aligned}
&
-\IN{3}{A:E:G}
-\IN{3}{A:F:CDG}
-\IN{3}{B:E:CF}
-\IN{3}{B:ADG:CEF}
-\IN{3}{C:D:F} \\
&
-\IN{3}{C:D:AG}
-\IN{3}{D:E:ACFG}
-\IN{3}{D:G:BCF}
-\IN{3}{E:F:ACG}
-\IN{3}{F:G:CD} \\
&
-\IN{3}{G:AE:BCDF}
+\IN{2}{A:O}
+\IN{2}{F:G}
+\IN{2}{E:G}
+\IN{2}{E:F}
+\IN{2}{D:G} \\
&
+\IN{2}{D:F}
+\IN{2}{D:E}
+\IN{2}{C:D}
+\IN{2}{B:E}
+\IN{2}{A:F}
\ge 0,
\end{aligned}
\end{equation}
where expanding out the left-hand side in terms of entropies leads to \Eqref{eq:7mmi}. There is a lesson to be drawn from how complicated this is: heuristically, the fact that so many terms are needed to yield \Eqref{eq:7mmi} means that this inequality is highly redundant with respect to the set of all non-redundant instances of SA and MMI. This suggests that we may also accomplish redundancy and a simpler expression by restoring the redundant instances of SA and the weaker SSA inequalities that we removed. This indeed turns out to be a successful strategy which allows one to show that, in fact, one only needs to use a single instance of MMI to render \Eqref{eq:7mmi} redundant. Remarkably, the minimal representations of \Eqref{eq:7mmi} that one obtains now involve just $5$ terms, and an example different from \Eqref{eq:mmiproof} is
\begin{equation}
\label{eq:simple7}
    -\IN{3}{A:B:CG}+\Q{AB;CG:D}+\Q{ACG;BDF:E}+\Q{BCG;AD:F}+\Q{C;EFO:G}\geq0.
\end{equation}
Notice that this expression no longer involves instances of SA, which means it is not needed to imply \Eqref{eq:7mmi} in the presence of SSA and MMI. Hence,
\begin{nfact}
    Bulk SSA and MMI imply boundary MMI.
\end{nfact}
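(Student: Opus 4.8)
The claim is in fact an immediate corollary of the certificate \Eqref{eq:simple7} obtained just above, and the plan is simply to make the final implication explicit. Recall from the setup of Prop.~\ref{prop:mmitommi} that $\SQES$ obeys MMI on arbitrary boundary subregions whenever $\Sbulk$ obeys the $7$-party inequality \Eqref{eq:7mmi} — equivalently, \Eqref{eq:7mmi} is precisely the bulk inequality that \Eqref{eq:bulkineq} produces from the MMI contraction map of Tab.~\ref{tab:mmi}. So it suffices to show that \Eqref{eq:7mmi} is implied by instances of SSA and MMI alone, with no recourse to SA.

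First I would exhibit the explicit conical combination. The expression
\begin{equation}
    -\IN{3}{A:B:CG}+\Q{AB;CG:D}+\Q{ACG;BDF:E}+\Q{BCG;AD:F}+\Q{C;EFO:G}\geq0
\end{equation}
is built from a single instance of MMI, namely $-\IN{3}{A:B:CG}\ge0$, together with four instances of SSA in the form of \Eqref{eq:SSAins} (one of them, $\Q{C;EFO:G}$, is of weak-monotonicity type because it carries the purifier $O$). I would then verify by direct entropy bookkeeping — expanding each $I_3$ and $Q$ via \Eqref{eq:mmiq} and \Eqref{eq:ssaq} and cancelling — that the left-hand side collapses precisely to $\S{ABDG}+\S{ACEG}+\S{BCFG}-\S{A}-\S{B}-\S{C}-\S{ABCDEFG}$, i.e.\ to \Eqref{eq:7mmi}. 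The salient point is that no instance of SA appears anywhere in this certificate.

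Finally I would conclude: since we are assuming $\Sbulk$ obeys SSA (as always in this section) and additionally imposing bulk MMI, $\Sbulk$ in particular obeys every SSA and MMI instance entering the combination above, hence by linearity it obeys \Eqref{eq:7mmi}, hence $\SQES$ obeys MMI. The only nontrivial ingredient here — and thus the main obstacle in a from-scratch treatment — is producing a \emph{short} certificate: the generic solution to the redundancy linear program \Eqref{eq:lpsol} over the facet inequalities alone already has $21$ terms and obscures the structure, and the key move is to deliberately re-admit into the system $Ax\ge0$ the redundant, weaker SSA instances that a facet description would discard, and then re-solve \Eqref{eq:lpsol} with an all-$1$ objective; this is what forces the support of $q$ down to just five terms. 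Once such a combination is computed (or guessed), checking that it expands to \Eqref{eq:7mmi} is purely mechanical.
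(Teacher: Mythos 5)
Your proposal is correct and follows essentially the same route as the paper: it establishes the fact by exhibiting the five-term certificate \Eqref{eq:simple7} (one instance of MMI plus four instances of SSA, including a weak-monotonicity-type one carrying the purifier), checking that it expands to the $7$-party bulk inequality \Eqref{eq:7mmi}, and observing that no instance of SA appears. Your remarks on how the short certificate is found by re-admitting redundant SSA instances into the linear program \Eqref{eq:lpsol} also match the paper's discussion.
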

It is worth emphasizing that SSA and MMI are effectively $3$-party inequalities, which we just lifted to $n=7$ by applying them to parties grouped into larger subsystems. In this sense, we have found that our genuine $7$-party inequality in \Eqref{eq:7mmi} is in fact implied by much simpler $n=3$ inequalities. Interestingly, the highest lift of MMI that turns out to be needed involves $\abs{I\cup J \cup K}=4$ parties, rather than $7$ ($3$ is not enough). On the other hand, one does need to utilize at least one instance of SSA involving $\abs{I\cup J\cup K}=7$ parties, as can be seen in \Eqref{eq:simple7}.

\addcontentsline{toc}{section}{References}
\bibliographystyle{JHEP}
\bibliography{references}

\end{document}